\setlist[enumerate]{leftmargin=.5in}
\setlist[itemize]{leftmargin=.5in}
\crefname{hypothesis}{Hypothesis}{Hypotheses}
\title{Network Interpolation\thanks{Submitted to the editors on June 28, 2019.
    \funding{This research was supported by NSF Award DMS-1830274,
      ARO Award W911NF19-1-0057, and ARO MURI.}}}
\author{Thomas Reeves\thanks{Center for Applied Mathematics, Cornell University, Ithaca, NY 14853
  (\email{tr352@cornell.edu}).}
\and Anil Damle\thanks{Department of Computer Science, Cornell University, Ithaca, NY 14853
	(\email{damle@cornell.edu}, \email{arb@cs.cornell.edu}).}
\and Austin R. Benson\footnotemark[3]}
\newcommand{\rev}[1]{{#1}}
\begin{document}

\maketitle

\begin{abstract}
  Given a set of snapshots from a temporal network we develop, analyze,
  and experimentally validate a so-called network interpolation scheme. Our
  method allows us to build a plausible, albeit random, sequence of graphs that transition
  between any two given graphs. Importantly, our model is well characterized by a
  Markov chain, and we leverage this representation to analytically estimate the hitting time (to a predefined distance to the target graph) and
  long term behavior of our model. These observations also serve to provide interpretation and justification for a rate parameter in our model. Lastly, through a mix of synthetic and real-world data
  experiments we demonstrate that our model builds reasonable graph trajectories
  between snapshots, as measured through various graph statistics.
  In these experiments, we find that our interpolation scheme compares
  favorably to 
  common network growth models, such as preferential attachment and triadic closure.
\end{abstract}

\begin{keywords}
  network science, interpolation, dynamic networks
\end{keywords}

\begin{AMS}
  90B15, 
  05C82, 
  68R10 
\end{AMS}

\section{Introduction}

Dynamic networks for temporal interactions of complex systems
are a pervasive model throughout the sciences~\cite{Holme-2012-temporal}.
They are used to analyze, for example,
interactions in social networks~\cite{Fowler-2008-happiness,Leskovec-2005-graphs},
communication systems~\cite{Kossinets-2008-vector-clocks,Leskovec-2008-planetary},
digital currency transactions~\cite{Kondor-2014-bitcoin,Paranjape-2017-motifs}, and
protein-protein interactions~\cite{Han-2004-evidence,Komurov-2007-revealing}.
Often, these dynamic datasets are recorded as a sequence of ``snapshots''
(also called ``slices''~\cite{Mucha-2010-multislice} or ``layers''~\cite{Kivela-2014-multilayer}),
where the snapshot represents the network at a single point in time or an aggregation of data over a period of
time. A sequence of snapshots is often the fundamental type of data used to
derive methods for dynamic network analysis~\cite{Araujo-2014-Com2,Dunlavy-2011-temporal,Gorovits-2018-LARC,Henderson-2010-forensics}.

Even if dynamic interactions occur in real time, there are a number of reasons
why one may only have access to a sequence of snapshots. A principal reason is
that data may only be collected at regular intervals.
Specifically, such scenarios are common in survey data.
For example, sociological studies record social networks of groups at
different points in
time~\cite{VanDeBunt-1999-friendship,Freeman-1980-semi,Michell-1997-pecking},
and U.S.\ Census data such as the Survey of Income and Program Participation
(SIPP) records job movement by surveying households at regular intervals.
An online analog to offline surveys is Web crawling.  In this scenario,
sequences of snapshot networks are recorded from a sequence of crawls that
collect network data and are subsequently analyzed for their dynamic
structure~\cite{Backstrom-2006-group,Lewis-2008-tastes,Mislove-2008-growth}.
In other cases, temporal network data is aggregated upon public release.
This happens due to privacy concerns in biomedical
data~\cite{Ball-2007-DAWN,Benson-2018-simplicial} or because the interaction is
associated with a regularly scheduled event, such as coauthorship of a computer
science conference paper in a given
year~\cite{Leskovec-2005-graphs,Ley-2009-DBLP}.

Independent of the manner in which a sequence of snapshots is obtained, a natural
problem is inferring what happens in the network between the snapshots, when the underlying true data is not available. Such reverse engineering would enable better real-time data
analysis, localization of structural changes in the graph, and understanding of
social, financial, or biological dynamics. More generally, we would like to generate a plausible sequence of graphs that takes us from one snapshot to the next. Beyond enabling exploration of the network between snapshots, such a process is valuable in the development of synthetic data for streaming algorithms by providing test sets (sequences of graphs) anchored to the real data.


A natural first approach to this problem is to appeal to network growth models
such as preferential
attachment~\cite{Albert-1999-scaling,Krapivsky-2000-connectivity}, triadic
closure~\cite{Jackson07,Jin01}, mixture models~\cite{Overgoor-2018-choosing}, or
stochastic actor oriented models~\cite{Snijders-1996-SAOM}.  In this setup, one
would start with a snapshot and use a network growth model, perhaps with parameters determined by the structure of the next snapshot, as a guess at how the
network evolves. However, there is no guarantee that after an appropriate number of steps such growth models will be close in
structure to the next snapshot, and indeed we see that this is the case in our experiments outlined in \Cref{sec:experiments}. Across all of our experiments, the behavior we observe with
these models is that when initiated with one snapshot they do not have the same
structure of the network at the time of the next snapshot.


\begin{figure}
	\includegraphics[width=0.495\linewidth]{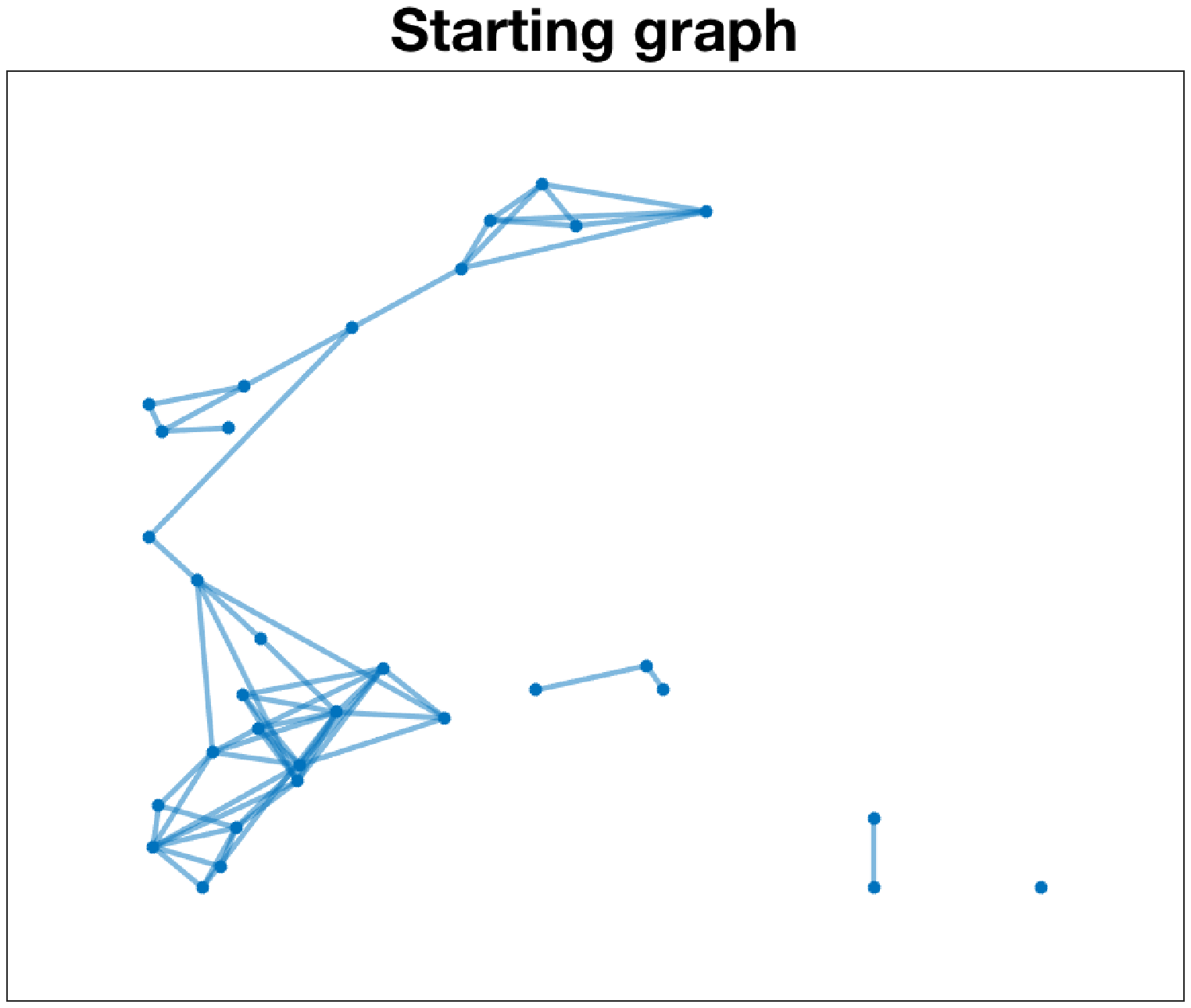}
	\includegraphics[width=0.495\linewidth]{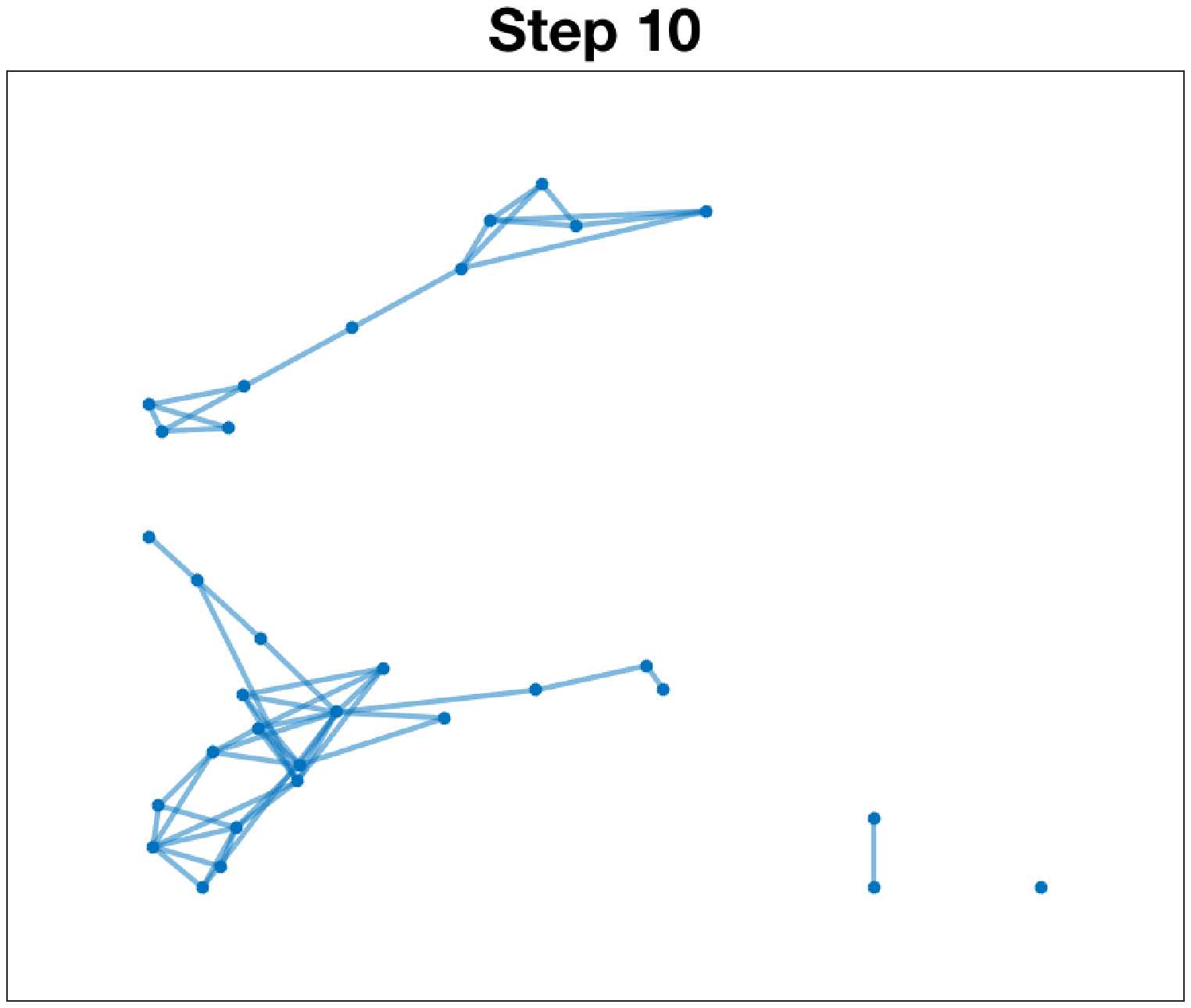} \\
	\vspace{-5mm}
	\includegraphics[width=0.495\linewidth]{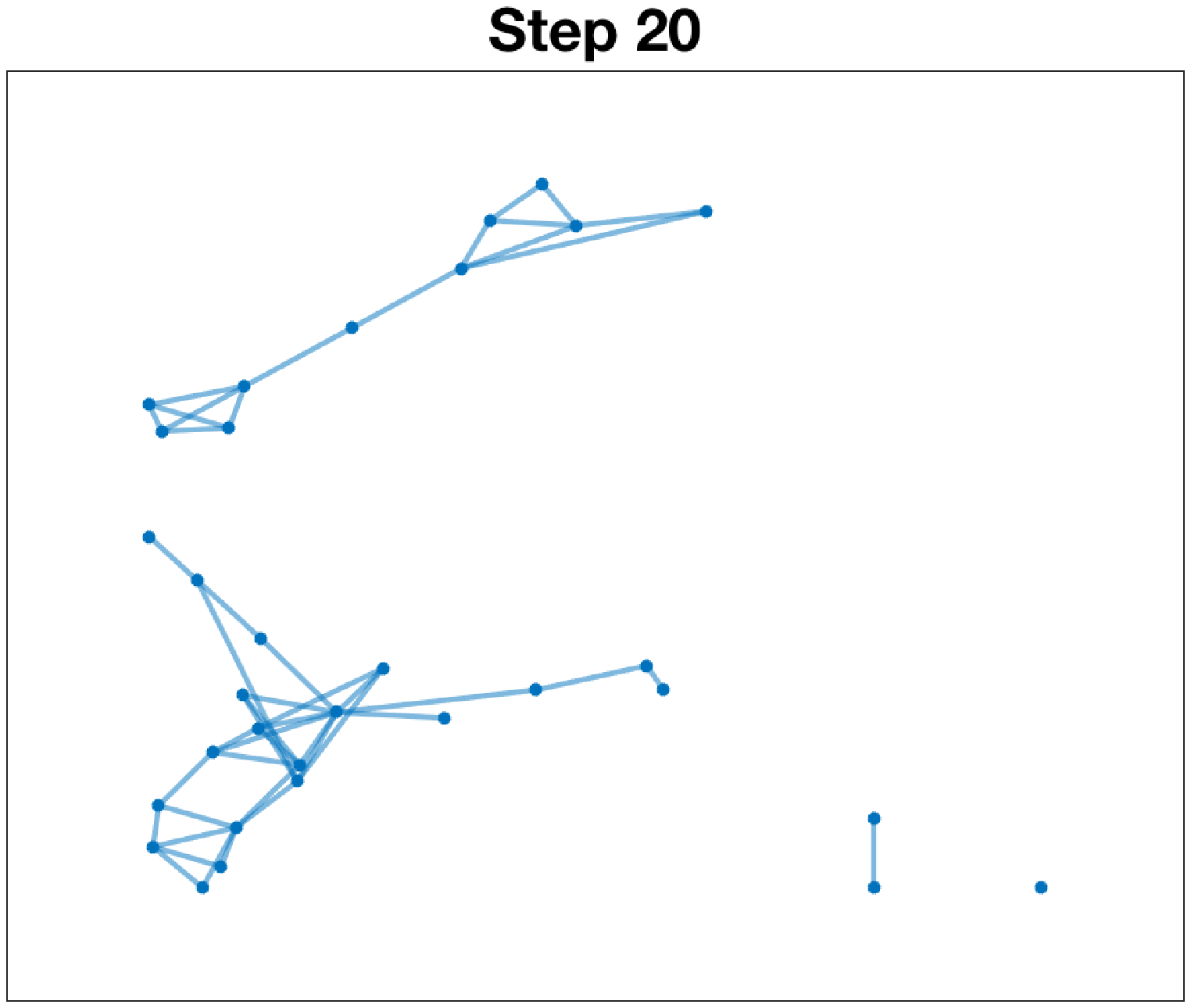}
	\includegraphics[width=0.495\linewidth]{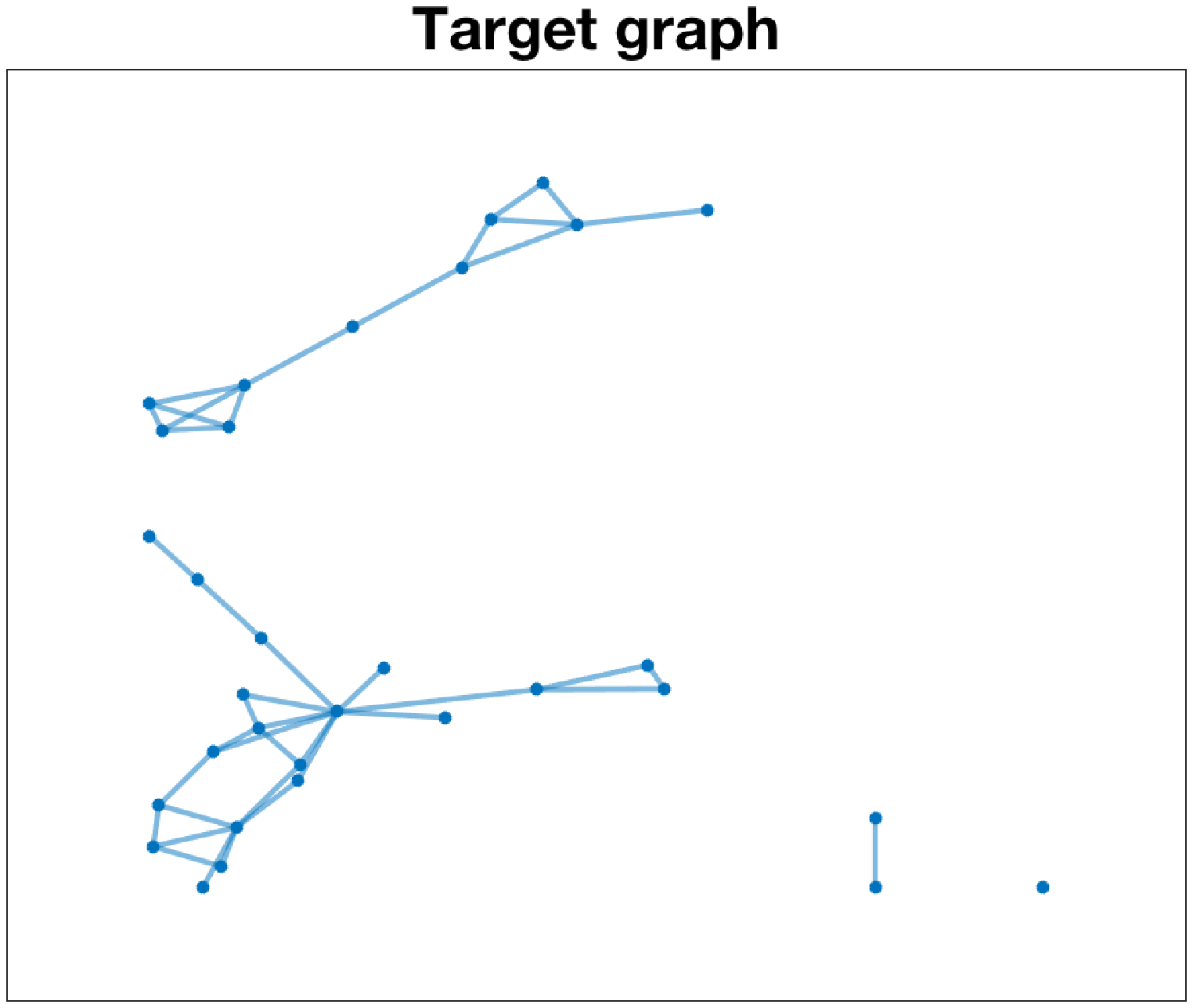}
	\vspace{-5mm}
	\caption{Example graphs from an instance of our network interpolation model. 
	The top left is the starting graph (Step 0), and the bottom right is the target graph 
	(which is reached at Step 31); these two graphs are snapshots from a dynamic 
	social network constructed from survey responses~\cite{VanDeBunt-1999-friendship}.
	 The other two graphs interpolate between the starting and target graphs and occur at the indicated 	
	 intermediate steps indicated, where a single edge is added or deleted in each step
	 (the target edit distance is set to $d_t = 0$ and the rate parameter is set to $s=1$).}
	\label{fig:vdbuntgraphs}
\end{figure}

Colloquially, employing a growth model looks like using an extrapolation method to try and move from one snapshot to the next\textemdash the only considerations are the choice of the parameters in a growth model and the starting snapshot. In contrast, because we have a sequence of snapshots a more apt analogy is to try and build an interpolation strategy. Here we will use both the starting and ending snapshots and explicitly generate a path between them. To illustrate the subsequent discussion, \Cref{fig:vdbuntgraphs} shows four graphs produced in an instance of our network interpolation model. 

We propose a temporal network model that, given two sequential snapshots,
provides a path \emph{of graphs} that begins at one snapshot and ends exactly at the
next snapshot\textemdash i.e., it exactly interpolates the snapshots. More specifically, the model provides a feasible sequence of
additions and deletions of edges that transitions between the two snapshots. We stress that a key benefit of our model is that it does \emph{not} simply interpolate one-dimensional statistics of the snapshots (such as the clustering coefficient). It actually provides a sequence of graphs whose statistics interpolate those of the snapshots.

Of course, absent sufficient additional side information, there could be an infinite
number of potential sequences from one graph to the next. Thus, our model is determined by the process through which we build the interpolation. More specifically, our model is based on a Markov chain on
graphs. Given a starting graph $G$ and a target graph $H$, our model makes a
sequence of edits to the graph $G$; at each step, a biased coin determines
whether or not we increase or decrease the edit distance between $H$ and $G$ by
one, where the bias depends on the current edit distance.

While our model is a Markov chain on the space of all graphs of appropriate
size, we show how to analyze its structure using a much simpler Markov chain on the set
of edit distances between the starting and target graphs. This reduces the
dimension of the chain from exponential in the number of nodes to quadratic in
the size of the starting and target graph. By assuming a flexible 
bias parameter in the type of transition (increasing or decreasing
edit distance) we are able to provide an analytic expression for the expected hitting time
of the Markov chain, thereby characterizing the expected number of edge additions and deletions
needed to reach the target graph as a function of a rate parameter. We also theoretically show that the limiting
distribution is concentrated around the target graph.  Our theory is also
validated through numerical experiments.

A further benefit of our model is that it may be adapted beyond just the
interpolation setting. Specifically, rather than being asked to explicitly hit
the target graph it can also evolve to a graph within a given edit
distance of the target graph and then fluctuate around that point. This is
useful in several settings. For example, one may assume a given snapshot is
noisy and matching it exactly may not be desirable. Alternatively, this process
may be used to construct long term trajectories of graphs that fluctuate around
a given target. These trajectories, anchored by real-world graphs, may then be
used in the development and testing of streaming algorithms.

Experimentally, we use our model to analyze a number of synthetic and real-world
network snapshot sequences. We show that our model can interpolate between
snapshots, while common extrapolatory growth models deviate substantially from
the exhibited structure even when we fit growth parameters to the snapshots. One key example is that our
network interpolation scheme often maintains clustering between snapshots (as measured by the mean and global clustering coefficients),
whereas growth models (even ones based on triadic closure) fail to reasonably
represent the clustering between snapshots in communication, social, and
collaboration networks.

\section{Related work}

The most relevant models in the literature are snapshot models
\cite{Ghasemian16,Xu14}, which consist of probabilistically generated graphs
whose parameters (such as edge connection probabilities and vertex labels) vary
randomly over a series of discrete time steps. Typically, the number of snapshots
is very small compared to the size of the entire graph, whereas the granularity of our method
is at a per-edge level. The main goal of existing snapshot studies is
statistical inference of the model parameters.
For example, Bhattacharjee \emph{et al.} consider a series of
independently drawn graphs with a given community structure that exhibits a
sudden change and study the statistical estimation of the time
of the change~\cite{Bhattacharjee18}.
We perform a similar change-point detection experiment in \Cref{sec:synthetic}.

Importantly, snapshot models do not capture the fine-grain changes that take place
in a dynamic network. Network data is becoming increasingly available through fine-grain measurements of temporal
interactions~\cite{Farajtabar-2015-coevolve,Holme-2012-temporal,liu2018sampling,Scholtes-2017-networks},
and snapshot graphs are designed to capture coarse-grain structure.
Also, snapshot models do not yet fully
harness temporal structure to identify the emergence of structure in an
\emph{evolutionary} framework. Aldecoa and Mar\'{i}n head toward this
direction by proposing a rewiring process from one graph to another as a
benchmark for evaluating community detection algorithms~\cite{Aldecoa13}.
In order to tackle the problem of understanding structure, we need fine-grain models for network evolution.

Another relevant model is the stochastic actor oriented model~\cite{Snijders-1996-SAOM}. 
Nodes are conceptualized as actors who control their ties to the other actors in the
network. Each actor can change ties at time points determined by a Poisson rate
function, and 
the probability of the actor toggling a tie given snapshots is estimated \cite{VanDeBunt-1999-friendship}. By
contrast, our model processes edges one by one, does not characterize the
individual nodes in the network, and operates on a global level.

Finally, another related idea to our research is so-called ``network archaeology,'' which
reconstructs network histories given a present day
snapshot~\cite{Navlakha11,Young2018}. These methods infer growth model parameters by
running growth models in reverse to find likely ancestors of a graph. Our model instead
interpolates from one snapshot to the next.

\section{Network interpolation model}

We now concretely present and analyze our model for network interpolation based on given starting and target graphs. 
All code and data used for this paper are available at \url{https://github.com/tr-maker/networkinterpolation}.

\subsection{Model description}
We initialize our model with the starting graph $G$ and target graph $H$. For example, we could be given two snapshots of a social network and want to create a sequence of graphs interpolating between them. At each step, our model makes an update to $G$ so that it represents the current graph. (The graph $H$ remains fixed throughout the interpolation.) 
Our graph evolution model is in terms of the \emph{graph edit distance} $d$ between $G$ and $H$, which is the minimum number of \emph{moves} (edge or vertex additions or deletions) needed to go from the former graph to the latter. Without loss of generality, we assume that $G$ and $H$ have the same number of vertices $n$,\footnote{This is not particularly limiting as there is no restriction on $G$ or $H$ having multiple connected components or nodes with no edges.} and that moves consist only of edge additions and deletions. At each step, we change $G$ so that 
\begin{itemize}
	\item with probability $\varphi(d)$, we make some \emph{advancing} move that decreases the edit distance $d$ by $1$, or
	\item with probability $1-\varphi(d)$, we make some \emph{regressing} move that increases the edit distance $d$ by $1$.
\end{itemize}
We have freedom in how the advancing and regressing moves are chosen.\footnote{Because we are only modeling changes in the graph we omit the
	possibility for the graph to remain the same. This removes any notion of time
	from our model, though it could be introduced by further modeling when the
	edits occur.} 
Advancing moves (1) add an edge that is in $H$ but
not $G$ or (2) delete an edge that is in $G$ but not in $H$. 
Regressing moves (1) delete an edge in $G$ that is also in $H$ or (2) add an edge
to $G$ that is not in $H$. The simplest model to analyze is when all these
possibilities are allowed, and each advancing (resp.\ regressing) move is chosen
uniformly at random from the set of all possible advancing (resp.\ regressing)
moves. However, we may for example wish to disallow adding an edge to $G$ that
is not in $H$. In such a model, we may have \emph{outdated edges} (edges from
the initial graph that are not in $H$) but never \emph{false edges} (edges not in the initial graph nor $H$).

Another parameter we can control is how the advancing probability $\varphi$ varies as a function of the current edit distance $d$ to the target. 
Here we choose it as a sigmoid function of $d$ that is centered at a given distance $d_t$. In the long term this process generates graphs that fluctuate around graphs that are edit distance $d_t$
from the target graph, and we call $d_t$ the \emph{target edit distance}.
We also need to deal with boundary conditions of $d=0$ and
$d=d_m$, where $d_m = \binom{n}{2}$ is the maximum possible edit distance. More formally, $\varphi$ is
specified as follows:
\begin{itemize}
	\item $\varphi$ is a monotonically increasing function of $d$.
	\item $\varphi(0) = 0$, $\varphi(d_m) = 1$, and $\varphi(d_{t}) = \frac{1}{2}$.
\end{itemize}
These properties are realized if we define, for $0 < d < d_m$,
\begin{equation}
\label{eq:varphi}
\varphi(d) = f\left(\frac{d-d_t}{s}\right),
\end{equation}
where $f$ is a sigmoid function and $s$ is a rate parameter that controls the hitting time to $d_t$ and the spread of distances around $d_t$ in the time-averaged limiting distribution.
The standard logistic function $f(x) = (1+e^{-x})^{-1}$ makes our model analytically tractable,
and we use this function throughout the paper.

Once we have chosen $\varphi$, \Cref{alg:networkinterp} compactly summarizes our model when we choose to run it for a fixed number of steps. (It is easily adaptable to settings where, \emph{e.g.,} we wish to terminate when the target distance is first reached\textemdash setting this target distance parameter to zero yields an interpolation scheme.) We note that if we want to store the dynamic graph at each step, we can save space by storing a sequence of edge updates to the initial graph. In addition, we do not need to store the set of regressing moves explicitly since
they can be inferred from the advancing moves. 

\begin{algorithm}
	\caption{Network interpolation}
	\label{alg:networkinterp}
	\begin{algorithmic}[1]
	    \STATE Input: $G =$ starting graph, $H =$ target graph, $T =$ number of steps, $s =$ rate, $d_t =$ target edit distance, $\varphi(\cdot,s,d_t) =$ advancing probability
		\STATE initialize $\mathcal{X} = \left\{\text{advancing moves from $G$ to $H$}\right\}$ 
		\STATE initialize $\mathcal{Y} = \left\{\text{regressing moves from $G$ to $H$}\right\}$
		\STATE $d \gets \text{edit distance between $G$ and $H$}$
		\FOR {$\tau = 1:T$}
		\STATE $\text{bool} \sim \text{Bernoulli}(\varphi(d))$
		\IF {bool}
			\STATE // Make an advancing move.
			\STATE sample a move from $\mathcal{X}$ and update $G,$ $\mathcal{X},$ and $\mathcal{Y}$ 
			\STATE $d \gets d-1$
		\ELSE 
			\STATE // Make a regressing move.
			\STATE sample a move from $\mathcal{Y}$ and update $G,$ $\mathcal{X},$ and $\mathcal{Y}$
			\STATE $d \gets d+1$
		\ENDIF
		\ENDFOR
	\end{algorithmic}
\end{algorithm}

\Cref{alg:networkinterpimp} describes in more detail an efficient implementation of \Cref{alg:networkinterp} \rev{for undirected graphs} given the adjacency matrices of the starting and target graphs. If $A$ is the strictly upper triangular part of the adjacency matrix of the starting graph and $B$ the strictly upper triangular part of the adjacency matrix of the target graph, then the matrix $U = B-A$ keeps track of the advancing and regressing moves.\footnote{\rev{If instead the starting and target graphs are directed, then we let $A$ be the entire adjacency matrix of the starting graph, $B$ be the entire adjacency matrix of the target graph, and $U = B-A$.}}
An entry of $1$ in $U$ indicates an edge that is \rev{not in} the current graph but in the target graph;
an entry of $-1$ in $U$ indicates an edge that is in the current graph but not in the target graph;
and an entry of $0$ in $U$ indicates an edge that is both in the current graph and the
target graph, or not in either. Therefore, the number of nonzeros in $U$ is the
number of advancing moves (or the edit distance $d$ between the current graph and the target graph), and the number of zeros of $U$ is the number of regressing moves. 

\begin{algorithm}
	\caption{Network interpolation (sketch of efficient implementation \rev{for undirected graphs})}
	\label{alg:networkinterpimp}
	\begin{algorithmic}[1]
		\STATE Input: $A =$ strictly upper triangular part of adjacency matrix of starting \rev{undirected} graph, $B =$ strictly upper triangular part of adjacency matrix of target \rev{undirected} graph, $T =$ number of steps, $s =$ rate, $d_t =$ target edit distance, $\varphi(\cdot,s,d_t) =$ advancing probability
		\STATE initialize $U = \text{$B-A$}$ 
		\STATE $d \gets \text{number of nonzeros of $U$}$
		\FOR {$\tau = 1:T$}
		\STATE $\text{bool} \sim \text{Bernoulli}(\varphi(d))$
		\IF {bool}
		\STATE // Make an advancing move.
		\STATE sample an index $i$ from the nonzero entries of $U$
		\IF {$U(i) = 1$}
		\STATE $A(i) \gets 1$  \;\;// Add the corresponding edge to $A$.
		\ELSE
		\STATE $A(i) \gets 0$ \;\;// $U(i) = -1$, so we delete the corresponding edge from $A$.
		\ENDIF
		\STATE $U(i) \gets 0$
		\STATE $d \gets d-1$
		\ELSE 
		\STATE // Make a regressing move.
		\STATE sample an index $i$ from the zero entries of $U$
		\IF {$B(i) = 1$}
		\STATE $A(i) \gets 0$,\; $U(i) \gets 1$ \;\;// Delete the corresponding edge from $A$.
		\ELSE
		\STATE $A(i) \gets 1$, $U(i) \gets -1$ \;\;// $B(i) = 0$, so we add the corresponding edge to $A$.
		\ENDIF
		\STATE $d \gets d+1$
		\ENDIF
		\ENDFOR
	\end{algorithmic}
\end{algorithm}

\rev{Our algorithm scales well with respect to space and time complexity. Before giving a more formal analysis, we first give a sense of the practical running time based on one of our larger experiments in \Cref{sec:experiments}. There, we interpolate between 9 snapshots of a coauthorship graph with several hundred thousand nodes and edges in each snapshot. We set $s=1$ as the rate parameter and $d_t = 0$ as the target edit distance, and we sequentially ran our algorithm by using consecutive snapshots as the respective starting and target snapshots until the target snapshot was reached. The entire sequence of 8 consecutive interpolations covered 2862784 steps and took 49.2 seconds in total, or 17.2 microseconds per interpolation step, on an early 2015 MacBook Pro with 2.7 GHz and 8 GB of RAM.
(Our implementation is not highly optimized, and one possible approach for improving performance in practice 
would be to make the advancing and regressing moves in batches.)
}

\rev{We now analyze the space and time complexity of our algorithm theoretically. If $A$ and $B$ are sparse, then $U$ is sparse since it contains at most the combined number of nonzeros of $A$ and $B$. As the algorithm progresses, $U$ becomes sparser with each advancing move and denser with each regressing move. Due to our assumptions on the advancing probability $\varphi(\cdot,s,d_t)$, with high probability the sparsity of $U$ will be on the order of the sparsity of $A$ and $B$ throughout the algorithm. In other words, the total storage required for the algorithm is on the order of the storage required to store the starting and target graphs.}

\rev{As for the running time of the algorithm, initializing $U$ takes time linear in the number of nonzeros of $A$ and $B$. After that, each step of the algorithm depends on the complexity of the following three operations: 1) sampling a random index from the nonzero entries of $U$ (for an advancing move), 2) sampling a random index from the zero entries of $U$ (for a regressing move), and 3) updating an entry of $U$ (and similarly, updating an entry of $A$). In the case of uniform sampling and sparse $U$, the three operations can be made to take amortized constant time. By storing the entries of $U$ in a dynamically growing array and a hash map, where the entries of the array consist of the signed edges ($1$ or $-1$) in $U$ and the hash map maps each signed edge to its index in the array, the operations of insertion, deletion, search, and sampling a uniformly random element can be performed in amortized constant time. In particular, this takes care of operations 1) and 3). To implement operation 2), we can use rejection sampling, drawing uniformly random edges from the set of all possible edges until we draw an edge whose corresponding entry in $U$ is zero. If the number of nonzeros in $U$ is linear in the number of vertices, then this process takes a constant number of iterations with high probability. Altogether, each step of the algorithm takes amortized constant time. Therefore, the total running time is on the order of the number of nonzeros of $A$ and $B$, plus the order of the number of steps $T$.}

\Cref{fig:editdistances} shows how various choices of $s$ in \cref{eq:varphi} affect the series of edit distances from the target graph. 
For \Cref{fig:editdistances} and the remainder of the figures analyzed in \Cref{sec:analysis}, 
the starting graph is a 50-node Erd\H{o}s-R\'{e}nyi random
graph with edge connection probability 0.5, the target graph is a 50-node
stochastic block model divided into two equally sized clusters with in-cluster
edge connection probability 0.9 and out-of-cluster edge connection probability
0.1, and the target edit distance $d_t$ is equal to $10$.

\begin{figure}
	\centering
	\includegraphics[width=0.5\columnwidth]{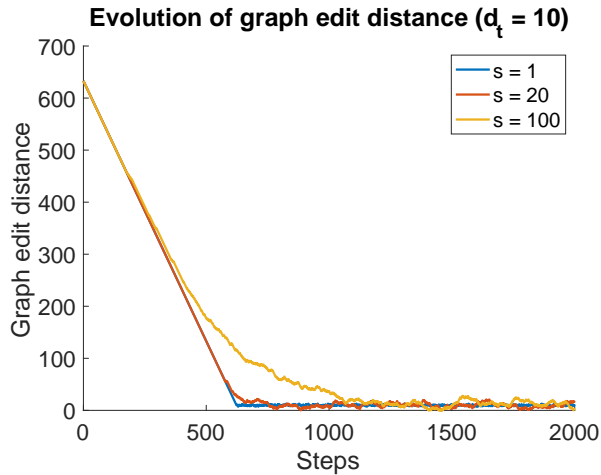}
	\caption{Evolution of edit distance from an  Erd\H{o}s-R\'{e}nyi random starting graph to a target graph of 2 communities,
	where transitions follow \cref{eq:varphi}. The higher the rate parameter $s$, the longer the hitting time to $d_t = 10$ and the 
	more spread out the distribution of edit distances.}
	\label{fig:editdistances}
\end{figure}

\subsection{Model analysis}\label{sec:analysis}

From \Cref{alg:networkinterp}, our model is a Markov chain on graphs. However, many interesting properties of our
model can be gleaned by looking at just the edit distances at each step of the
interpolation.  Our model naturally describes a stochastic process on edit distances 
$\{0, \ldots, d_m\}$; this process is in fact Markovian
\emph{provided that an advancing move and a regressing move are always possible for $0 < d <  d_m$}.
This very mild assumption holds exactly in the simplest case where all
possible advancing and regressing moves are allowed,\footnote{It is not strictly
	true in the case that we disallow false edges (i.e., disallow adding an edge
	to the current graph $G$ that is not in the target graph $H$), since every
	time an edge that is not in $H$ is removed from $G$, the edit distance for
	which a regressing move is impossible decreases by one.}  and our theoretical
estimates under this assumption closely match the experiments.


With the Markov assumption, $\varphi(i)$ describes the transition probability from edit distance $i$ to edit distance $i-1$. The transition probability matrix therefore has $1 - \varphi(i)$ above the diagonal and $\varphi(i)$ below the diagonal:
\begin{align}
\label{1dtransition}
\begin{bmatrix}
& 1\\
\varphi(1) & & 1-\varphi(1) \\
& \varphi(2) & & 1-\varphi(2) \\
& & \ddots & & \ddots \\
& & & \varphi(d_m-1) & & 1-\varphi(d_m-1). \\
& & & & 1
\end{bmatrix}.
\end{align}

With this, we analyze two important properties of our model: the limiting distribution of edit distances from the target graph (\Cref{thm:limitdist}) and the hitting time to a distance $d_t$ from the target graph (\Cref{thm:hittingtimes}). \rev{For these two quantities, we give formulae that can be computed explicitly in terms of the rate parameter $s$. These propositions are therefore of practical importance for choosing the rate parameter in modeling.} 
We then consider the properties of a random walk on the space of all graphs on $n$ vertices (\Cref{thm:markovongraphs}).
For these theoretical results, we assume that $\varphi$ is of the form in \cref{eq:varphi} with $f$ the standard logistic function. 
To experimentally validate the propositions, we use the simplest version of the model where each advancing (resp.\ regressing) 
move is chosen uniformly at random from the set of all possible advancing (resp.\ regressing) moves.

\subsubsection{Time-averaged limiting distribution of edit distances}
The Markov chain as described has period 2: the state at any given time step is either an even or an odd distance away from the initial edit distance. Nevertheless, the Markov chain has a time-averaged limiting distribution in the sense that for each state, the proportion of time spent in that state before step $n$ converges as $n\to\infty$ \cite{KS}. 

The left eigenvector corresponding to the eigenvalue 1 of the transition probability matrix is represented in unnormalized form as follows:
\begin{align}
\begin{bmatrix}
\varphi(1)\cdots\varphi(d_m-1) \\
\varphi(2)\cdots\varphi(d_m-1) \\
(1-\varphi(1))\varphi(3)\cdots\varphi(d_m-1) \\
(1-\varphi(1))(1-\varphi(2))\varphi(4)\cdots\varphi(d_m-1) \\
\cdots \\
(1-\varphi(1))\cdots(1-\varphi(d_m-2)) \\
(1-\varphi(1))\cdots(1-\varphi(d_m-1))
\end{bmatrix}^T.
\end{align}
The time-averaged limiting distribution $v = \{v_i\}_{i=0}^{d_m}$ of the Markov chain is this eigenvector normalized by the sum of the entries. Although it is possible to do this eigenvector computation directly, we have the following simpler formula for \rev{approximately computing} entries of the eigenvector around $d_t.$
\begin{proposition}
	\label{thm:limitdist}
	Assume that $d_m \ge 2d_t$. There exists a constant $C$ such that for every $d_t \ge 2$ and $0 < \epsilon < 1$, if $s < C\frac{d_t^2}{\log 1/\epsilon}$, each component of the time-averaged limiting distribution satisfies the following approximation for $0 < k < d_t$:
	\begin{equation}
	\label{eq:limitdist}
	\left\lvert v_{d_t \pm k} 
	- \frac{e^{-\frac{k(k-1)}{2s}} + e^{-\frac{k(k+1)}{2s}}}{2\left(e^{-\frac{(d_t-1)d_t}{2s}} + 2\sum_{i=0}^{d_t-2} e^{-\frac{i(i+1)}{2s}}\right)} \right\rvert < \epsilon.
	\end{equation}
	
\end{proposition}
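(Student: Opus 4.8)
The plan is to treat the transition matrix in \cref{1dtransition} as a birth--death chain and read off its time-averaged limiting vector $v$ from detailed balance, deferring all approximation to the normalization step. First I would record the reversibility relations $v_i(1-\varphi(i)) = v_{i+1}\varphi(i+1)$ for $0 \le i \le d_m-1$, equivalently $v_{i+1}/v_i = (1-\varphi(i))/\varphi(i+1)$ (one checks these reproduce the unnormalized eigenvector already displayed). The point to keep in mind is that for every index strictly between $0$ and $d_m$ the value $\varphi$ is the logistic expression \cref{eq:varphi}, whereas the boundary values $\varphi(0)=0$ and $\varphi(d_m)=1$ enter only at the two extreme states; since $d_m \ge 2d_t$, the whole window $\{d_t\pm k : 0<k<d_t\}$ sits inside the logistic interior $\{1,\dots,d_m-1\}$.

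Next I would derive an \emph{exact} closed form for $v_{d_t\pm k}/v_{d_t}$ on that window. Telescoping the product of ratios and applying the factorization $1+e^{x} = 2e^{x/2}\cosh(x/2)$ to each factor, the constants $2$ cancel, the exponential parts collapse to $e^{-k^2/(2s)}$, and the hyperbolic cosines telescope to a single $\cosh(k/(2s))$, giving
\[\frac{v_{d_t\pm k}}{v_{d_t}} = e^{-k^2/(2s)}\cosh\!\left(\frac{k}{2s}\right) = \frac{e^{-k(k-1)/(2s)}+e^{-k(k+1)/(2s)}}{2};\]
for the $d_t-k$ side one repeats the computation or simply invokes the symmetry $\varphi(d_t-j)=1-\varphi(d_t+j)$ of the logistic about $d_t$. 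This identity is exact, so the error in \cref{eq:limitdist} can only come through $v_{d_t}$.

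Writing $a_k = e^{-k(k-1)/(2s)}$, so that the numerator above is $(a_k+a_{k+1})/2$, a short rearrangement identifies the denominator $2Z$ of \cref{eq:limitdist} exactly: $Z = a_{d_t} + 2\sum_{j=1}^{d_t-1}a_j = \sum_{d=1}^{2d_t-1} v_d/v_{d_t}$, the partial normalization over the symmetric window of radius $d_t-1$ about $d_t$. Hence $1/v_{d_t} = Z + R$, where the remainder $R = v_0/v_{d_t} + \sum_{d=2d_t}^{d_m} v_d/v_{d_t}$ gathers the bottom boundary state together with the upper tail. Since $Z \ge 2$ (because $a_1 = 1$) and $(a_k+a_{k+1})/2 \le 1$, comparing $1/(Z+R)$ with $1/Z$ bounds the left-hand side of \cref{eq:limitdist} by $R/(Z(Z+R)) \le R/Z^2$, so it suffices to show $R/Z^2 < \epsilon$.

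The remaining and main obstacle is a clean estimate of $R$. Factoring $a_{d_t}$ out of the upper tail and majorizing by a geometric series gives $\sum_{m\ge d_t} a_m \le a_{d_t}/(1-e^{-d_t/s})$, while the boundary term $v_0/v_{d_t} = \varphi(1)\,(v_1/v_{d_t})$ is smaller still because $\varphi(1) = (1+e^{(d_t-1)/s})^{-1}$ is exponentially small, so $R = O\!\big(a_{d_t}/(1-e^{-d_t/s})\big)$ with $a_{d_t} = e^{-d_t(d_t-1)/(2s)}$. The hypothesis $s < C d_t^2/\log(1/\epsilon)$ forces $a_{d_t} \le \epsilon^{1/(4C)}$, a high power of $\epsilon$. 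The delicate point is that the prefactor $1/(1-e^{-d_t/s})$ can be as large as order $s/d_t$ when $s$ is near its allowed maximum; this is absorbed by observing that $Z$ does not merely exceed $2$ but in fact grows on the order of $\sqrt{s}$ in the spread-out regime, so $Z^{-2}$ supplies a compensating factor of order $1/s$. Balancing these, $R/Z^2 = O(\epsilon^{1/(4C)}/d_t) < \epsilon$ once the absolute constant $C$ is chosen small enough, uniformly in $d_t\ge 2$ and $\epsilon\in(0,1)$. Everything preceding this estimate is exact algebra; the tail-and-boundary bookkeeping, together with the interplay between the growth of $Z$ and the prefactor in $R$, is where the real care is needed.
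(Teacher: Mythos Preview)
Your argument is correct and follows the same overall architecture as the paper's proof: derive the exact ratio $v_{d_t\pm k}/v_{d_t}$ from the birth--death recursion (the paper iterates \cref{eq:vrecursive} directly, you factor through $\cosh$, arriving at the identical closed form), then identify the target denominator as the partial normalization over the symmetric window $\{1,\dots,2d_t-1\}$ and control the leftover mass. The difference is in how the tail is handled. The paper asserts $\eta < c\,e^{-d_t(d_t-1)/(2s)}$ with a universal $c$ and pairs this only with the weak lower bound $Z>1$; taken literally this step is loose, because in the spread-out regime $s\asymp d_t^2$ the geometric tail carries a prefactor $(1-e^{-d_t/s})^{-1}\asymp s/d_t$ that is not uniformly bounded. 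You recognize this and compensate by using the matching growth $Z\asymp\sqrt{s}$ in the same regime, so that the quotient $R/Z^2$ absorbs the prefactor and leaves $O(\epsilon^{1/(4C)}/d_t)$, which is $<\epsilon$ once $C$ is small enough. In short: same skeleton, but your tail bookkeeping is tighter and actually closes the gap that the paper's one-line estimate glosses over.
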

\begin{proof}
	We can write the entries of $v$ recursively:
	\begin{equation}
	\label{eq:vrecursive}
	v_{i+1}= \frac{1 - \varphi(i)}{\varphi(i+1)}v_{i},\quad 0 \le i < d_m.
	\end{equation}
	From \cref{eq:vrecursive}, we make two observations. First, if $\varphi(d_t + i) = 1 - \varphi(d_t - i)$ for all $i$, then
	\begin{equation}
	\label{eq:vproperty1}
	v_{d_t + i} = v_{d_t - i}.
	\end{equation}
	Second, if we iterate \cref{eq:vrecursive} $k$ times, we get
	\begin{align*}
		v_{i + k} = \frac{1-\varphi(i)}{\varphi(i + k)} \left(\frac{1}{\varphi(i+1)} - 1\right) \cdots \left(\frac{1}{\varphi(i+k-1)} - 1\right) v_i,
	\end{align*}
	so in particular
	\begin{equation}
	\label{eq:vproperty2}
	v_{d_t + i} = \frac{1/2}{f(\frac{i}{s})} \left(\frac{1}{f(\frac{1}{s})} - 1\right) \cdots  \left(\frac{1}{f(\frac{i-1}{s})} - 1\right) v_{d_t}.
	\end{equation}
	Now assuming that $f$ is the standard logistic function, combining \cref{eq:vproperty1} and \cref{eq:vproperty2} gives that
	\begin{align}
		v_{d_t \pm i} &= \frac{1}{2}(1 + e^{-\frac{i}{s}})e^{-\frac{1}{s}} \cdots e^{-\frac{i-1}{s}} v_{d_t}
		= \frac{1}{2}(1 + e^{-\frac{i}{s}})e^{-\frac{i(i-1)}{2s}} v_{d_t} 
		= \frac{e^{-\frac{i(i-1)}{2s}} + e^{-\frac{i(i+1)}{2s}}}{2} v_{d_t}. \label{eq:vdt+i}
	\end{align}
	If $d_m \ge 2d_t$, $d_t \ge 2$, and $s$ is small enough relative to $d_t$, we can approximate the time-averaged limiting distribution by assuming that $v_i$ is supported on $0 < i < 2d_{t}$. Then \cref{eq:vdt+i} gives
	\begin{equation}
		1 = v_{d_t} + 2\sum_{i=1}^{d_t-1} \frac{e^{-\frac{i(i-1)}{2s}} + e^{-\frac{i(i+1)}{2s}}}{2} v_{d_t} + \eta v_{d_t},
	\end{equation}
	where the error
	\begin{equation}
	\label{eq:etabound}
	\eta < c e^{-\frac{d_t(d_t-1)}{2s}}
	\end{equation}
	for some universal constant $c$. So
	\begin{align*}
		\frac{1}{v_{d_t}} = 
		1 + 2\left( \frac{1}{2} + \frac{1}{2}e^{-\frac{(d_t-1)d_t}{2s}} + \sum_{i=1}^{d_t-2} e^{-\frac{i(i+1)}{2s}} \right) + \eta
		= e^{-\frac{(d_t-1)d_t}{2s}} + 2\sum_{i=0}^{d_t-2} e^{-\frac{i(i+1)}{2s}} + \eta
	\end{align*}
	and so $v_{d_t} = \frac{1}{e^{-\frac{(d_t-1)d_t}{2s}} + 2\sum_{i=0}^{d_t-2} e^{-\frac{i(i+1)}{2s}} + \eta}$.
	Going back to \cref{eq:vdt+i}, we get
	\begin{equation}
	v_{d_t \pm k} = \frac{e^{-\frac{k(k-1)}{2s}} + e^{-\frac{k(k+1)}{2s}}}{2\left(e^{-\frac{(d_t-1)d_t}{2s}} + 2\sum_{i=0}^{d_t-2} e^{-\frac{i(i+1)}{2s}}\right) + 2\eta}.
	\end{equation}
	The parenthesized term in the denominator is greater than 1, independent of $d_t$ and $s$. So for \eqref{eq:limitdist} to hold, it suffices that $\eta$ is at most some constant times $\epsilon$, where the constant is also independent of $d_t$ and $s$. The condition for $s$ in the proposition follows from \eqref{eq:etabound}.
\end{proof}

\Cref{fig:editdistanceshist} shows that the analytic formula reasonably approximates the limiting distribution of edit distances if the distribution of edit distances is not too spread out so that a symmetric approximation is reasonable.

\begin{figure}
	\includegraphics[width=0.495\columnwidth]{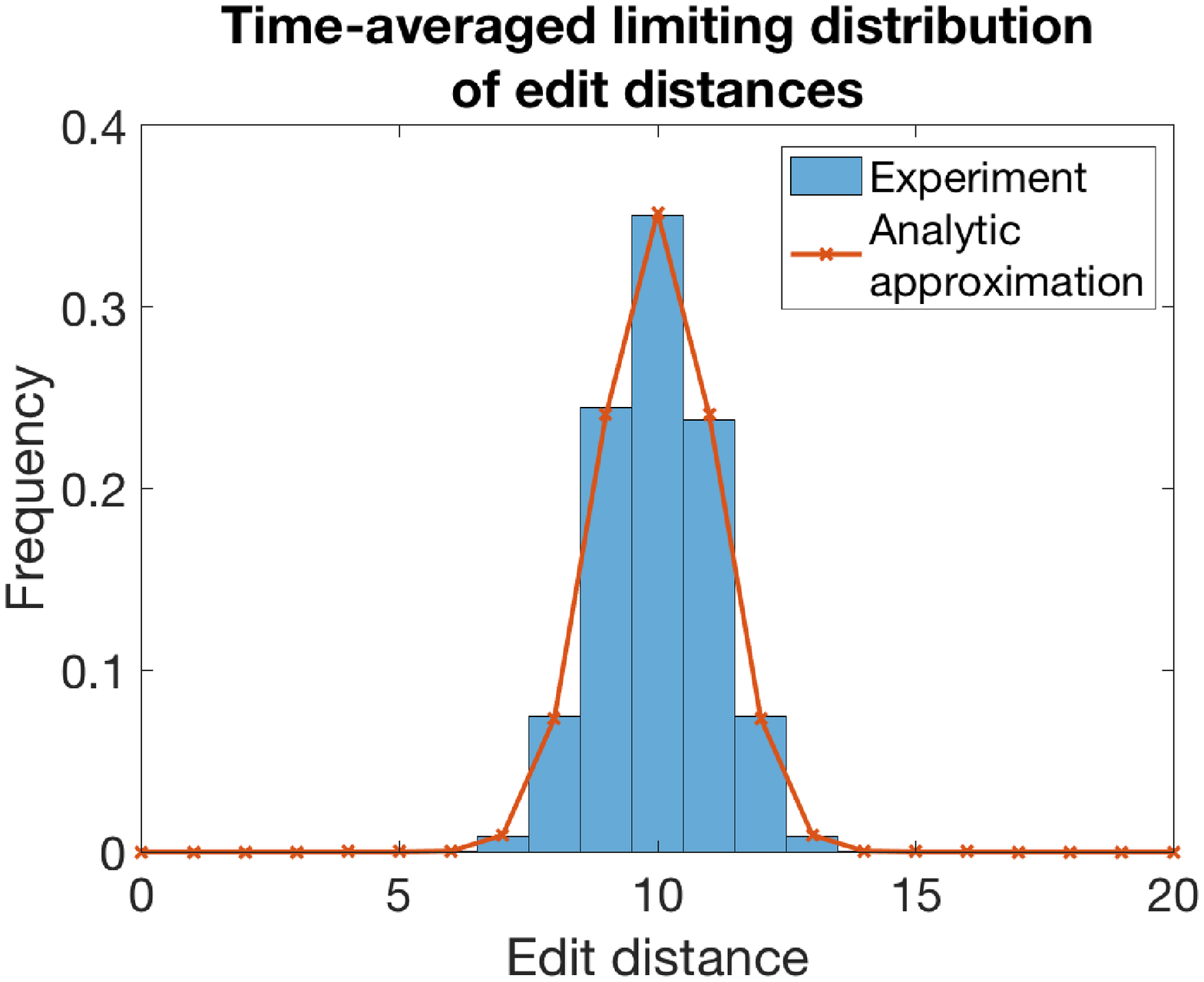} 
	\includegraphics[width=0.495\columnwidth]{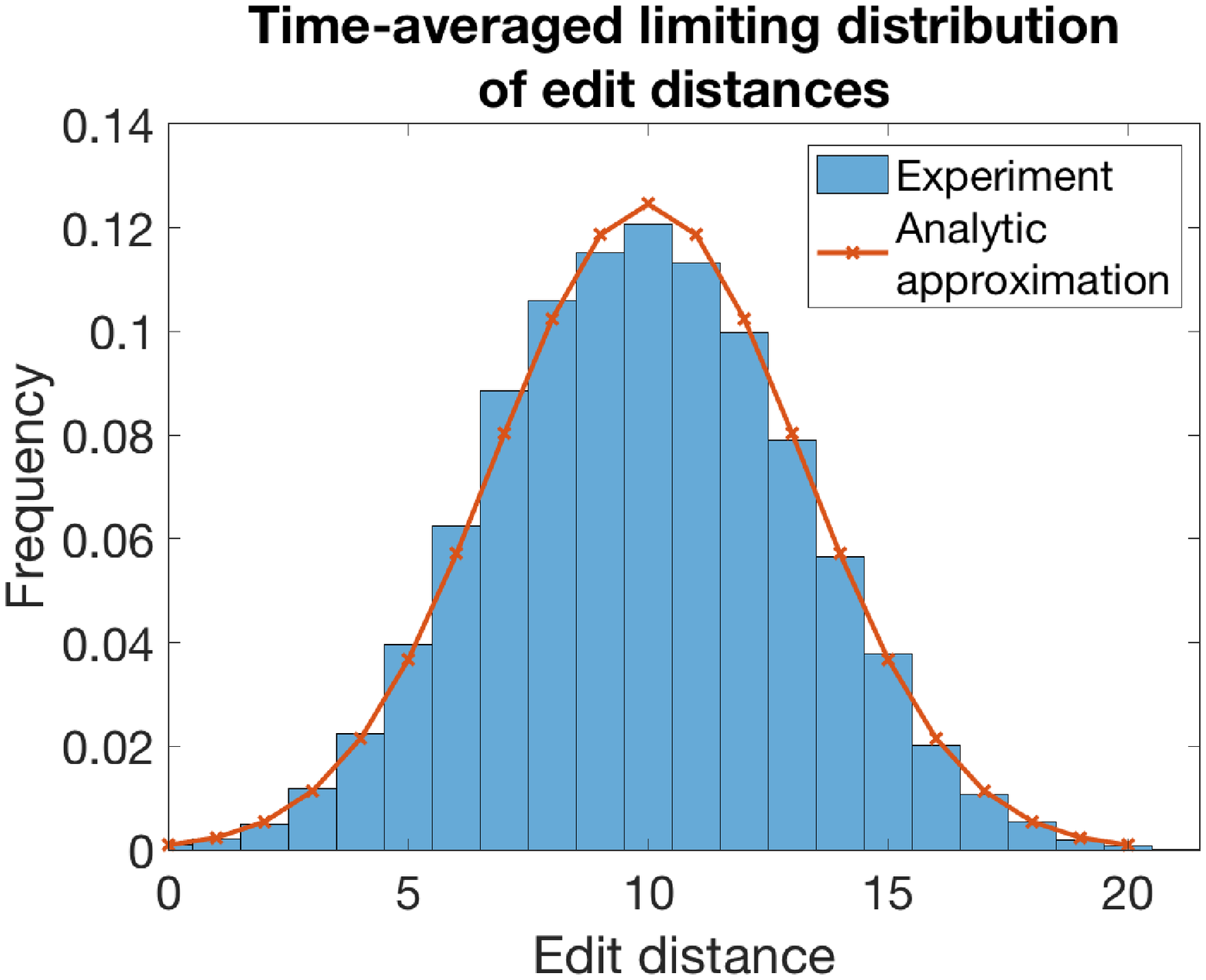}
	\caption{Histogram of the time-averaged limiting distribution of edit
		distances of an evolution of a random graph to a target graph
		consisting of 2 communities, together with its analytic approximation
		(\cref{thm:limitdist}) for two rate parameters ($s = 1$, left; and $s = 10$, right).
		The histograms are data from steps 10001--20000 of a simulation. The target distance $d_t = 10$.} 
	\label{fig:editdistanceshist}
\end{figure}


\subsubsection{Expected hitting time to the target edit distance}
The next proposition gives an analytic formula for the expected hitting time.
\begin{proposition}
	The expected hitting time $h_{d_o}$ to the target edit distance $d_t$, starting at the
	initial edit distance $d_o$ to the target graph, is
	\label{thm:hittingtimes}
	\begin{equation}
	h_{d_o} = (d_o-d_t) + 2 \left( \sum_{k=1}^{d_m - d_o} e^{-\frac{k(k+1)}{2s}} \frac{1 - e^{-\frac{k(d_o-d_t)}{s}}}{1 - e^{-\frac{k}{s}}} \right). \label{eq:sum}
	\end{equation}
	In particular, the hitting time is monotonic in $s$.
\end{proposition}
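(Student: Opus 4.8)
The plan is to reduce the hitting-time computation to a one-dimensional birth--death chain and to solve the resulting linear recurrence explicitly using the logistic form of $\varphi$. By the reduced transition matrix in \cref{1dtransition}, the edit-distance process is a nearest-neighbor walk on $\{0,\dots,d_m\}$ that moves from $i$ to $i-1$ with probability $\varphi(i)$ and to $i+1$ with probability $1-\varphi(i)$, reflecting at the endpoints. Because the walk is nearest-neighbor and $d_o>d_t$, every path from $d_o$ to $d_t$ must pass through each intermediate state, so the strong Markov property gives the exact decomposition
\[
h_{d_o}=\sum_{i=d_t+1}^{d_o}\tau_i,
\]
where $\tau_i$ is the expected time to first move from $i$ down to $i-1$. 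A first-step analysis on $\tau_i$ (either descend in one step, or ascend and return) yields the recurrence $\tau_i=\frac{1}{\varphi(i)}+\frac{1-\varphi(i)}{\varphi(i)}\tau_{i+1}$, with boundary value $\tau_{d_m}=1$ coming from $\varphi(d_m)=1$.

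Next I would insert the logistic form from \cref{eq:varphi}. The key simplification, already exploited in the proof of \cref{thm:limitdist}, is that for the standard logistic $f$ the ratio collapses to $\frac{1-\varphi(i)}{\varphi(i)}=e^{-(i-d_t)/s}$ while $\frac{1}{\varphi(i)}=1+e^{-(i-d_t)/s}$. Iterating the recurrence downward from the boundary then telescopes: the $k$-step product is $\prod_{l=i}^{i+k-1}e^{-(l-d_t)/s}=\exp\!\bigl(-\frac{1}{s}\bigl[k(i-d_t)+\frac{k(k-1)}{2}\bigr]\bigr)$, and after the cancellation of the $1+e^{-(i-d_t)/s}$ factors one obtains the closed form
\[
\tau_i=1+2\sum_{k=1}^{d_m-i}e^{-k(i-d_t)/s}\,e^{-k(k-1)/(2s)}.
\]

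Substituting this into the decomposition and interchanging the order of summation is the crux. Writing $n=i-d_t$, the $i$-sum becomes a finite geometric series $\sum_{n=1}^{d_o-d_t}e^{-kn/s}=e^{-k/s}\frac{1-e^{-k(d_o-d_t)/s}}{1-e^{-k/s}}$, and merging $e^{-k(k-1)/(2s)}e^{-k/s}=e^{-k(k+1)/(2s)}$ produces exactly the summand of \eqref{eq:sum}, while $\sum_{i=d_t+1}^{d_o}1$ gives the leading $d_o-d_t$ term. I expect the main obstacle to be the bookkeeping at the upper boundary: the inner sum in $\tau_i$ truncates at $d_m-i$ rather than at a fixed cutoff, so after interchanging sums the indices $k$ that survive with the \emph{full} geometric factor are precisely $1\le k\le d_m-d_o$, which is where the stated upper limit originates. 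The residual contribution from excursions that actually reach $d_m$ carries a Gaussian weight $e^{-k(k-1)/(2s)}$ with $k>d_m-d_o$; it is super-exponentially small once $d_m-d_o$ is large (and vanishes as $d_m\to\infty$, the regime of interest since $d_m=\binom{n}{2}$), so it is absorbed into \eqref{eq:sum}.

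Finally, monotonicity in $s$ is immediate from the closed form. The constant $d_o-d_t$ does not depend on $s$, and after rewriting $\frac{1-e^{-k(d_o-d_t)/s}}{1-e^{-k/s}}=\sum_{m=0}^{d_o-d_t-1}e^{-km/s}$ each summand is a product of the positive factors $e^{-k(k+1)/(2s)}$ and $e^{-km/s}$, every one of which is increasing in $s$. A positive sum of increasing functions is increasing, so $h_{d_o}$ increases with $s$, consistent with \Cref{fig:editdistances}.
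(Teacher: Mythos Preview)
Your approach is essentially identical to the paper's: your first-passage increments $\tau_i$ coincide with the paper's successive differences $h_i-h_{i-1}$, obey the same recurrence $\tau_i=\frac{1}{\varphi(i)}+\bigl(\frac{1}{\varphi(i)}-1\bigr)\tau_{i+1}$ with the same boundary value $\tau_{d_m}=1$, and the summation/interchange that produces \eqref{eq:sum} is exactly the computation the paper carries out. You are in fact slightly more careful than the paper, since you explicitly flag the residual contribution from indices $k>d_m-d_o$ (which the paper's trailing ``$\cdots$'' silently drops) and you supply the monotonicity argument that the paper states but does not prove.
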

\begin{proof}
	Let $h_i$ be the expected time to hit $d_t$ starting from edit distance $d_i$. Then we have the linear recurrence
	\begin{equation}
	\label{eq:linearrecurrence}
	h_{d_t} = 0, \quad
	h_i = 1 + \varphi(i)h_{i-1} + (1-\varphi(i))h_{i+1}.
	\end{equation}
	Let $d_o > d_t$ be the starting edit distance. We are interested in $h_{d_o}$.
	From \cref{eq:linearrecurrence}, adding $\varphi(i)h_i - h_i$ to both sides, subtracting $\varphi(i)h_{i-1}$ from both sides, and dividing by $\varphi(i)$ gives
	\begin{equation}
	h_{i} - h_{i-1} = \frac{1}{\varphi(i)} + \left( \frac{1}{\varphi(i)} - 1 \right)(h_{i+1} - h_{i}),
	\end{equation}
	which is a recurrence in terms of the successive differences of hitting times.
	When $\varphi(d) = f\left( \frac{d-d_t}{s} \right)$ is the standard logistic function, then the above equation simplifies to
	\begin{equation*}
		h_i - h_{i-1} = 1 + e^{-\frac{i-d_t}{s}} + e^{-\frac{i-d_t}{s}}(h_{i+1}-h_i)
	\end{equation*}
	with initial condition $h_{d_m} - h_{d_m-1} = 1$. Rewriting, we get
	\begin{align}
		h_k - h_{k-1} = 
		1 + 2\left( e^{-\frac{(d_m-1)-d_t}{s}-\cdots-\frac{k-d_t}{s}} + 
		e^{-\frac{(d_m-2)-d_t}{s}-\cdots-\frac{k-d_t}{s}} + \cdots + e^{-\frac{k-d_t}{s}} \right). \label{eq:hkhk1}
	\end{align}	
	Now sum \cref{eq:hkhk1} from $k = d_t+1$ to $d_m$.
	The left-hand side is then the telescoping sum $h_{d_m} - h_{d_t} = h_{d_m}$, and we obtain
	\begin{align}
		h_{d_m} = \sum_{k = d_t + 1}^{d_m} \left( 1 + 2\left( e^{-\frac{d_m-1-d_t}{s}-\cdots-\frac{k-d_t}{s}}
		+ e^{-\frac{d_m-2-d_t}{s}-\cdots-\frac{k-d_t}{s}} + \cdots + e^{-\frac{k-d_t}{s}} \right) \right). \label{eq:hdm}
	\end{align}	
	To find the value of $h_{d_o}$, we sum \cref{eq:hkhk1} from $k = d_o+1$ to $d_m$.
	The left-hand side is the telescoping sum $h_{d_m} - h_{d_o}$. Plugging in \cref{eq:hdm} for the value of $h_{d_m}$ gives
	\begin{align*}
		h_{d_o} &= \sum_{k = d_t + 1}^{d_m} \left( 1 + 2\left( e^{-\frac{d_m-1-d_t}{s}-\cdots-\frac{k-d_t}{s}} 
		 + e^{-\frac{d_m-2-d_t}{s}-\cdots-\frac{k-d_t}{s}} + \cdots + e^{-\frac{k-d_t}{s}} \right) \right) \\
		&\phantom{=} - \sum_{k = d_o + 1}^{d_m} \left( 1 + 2\left( e^{-\frac{d_m-1-d_t}{s}-\frac{d_m-2-d_t}{s}-\cdots-\frac{k-d_t}{s}} 
		 + e^{-\frac{d_m-2-d_t}{s}-\cdots-\frac{k-d_t}{s}} + \cdots + e^{-\frac{k-d_t}{s}} \right) \right) \\
		&= \sum_{k = d_t + 1}^{d_o} \left( 1 + 2\left( e^{-\frac{d_m-1-d_t}{s}-\cdots-\frac{k-d_t}{s}} 
		+ e^{-\frac{d_m-2-d_t}{s}-\cdots-\frac{k-d_t}{s}} + \cdots + e^{-\frac{k-d_t}{s}} \right) \right) \\
		&= (d_o - d_t) + 2 \sum_{k = 1}^{d_o - d_t} \left( e^{-\frac{k}{s}} + e^{-\frac{k}{s}-\frac{k+1}{s}} + \cdots 
		  + e^{-\frac{k}{s} -\frac{k+1}{s} \dots - \frac{d_m-1-d_t}{s}} \right) \\ 
		&= (d_o-d_t) + 2 \left( e^{-\frac{1}{s}} \frac{1 - e^{-\frac{d_o-d_t}{s}}}{1 - e^{-\frac{1}{s}}} + e^{-\frac{3}{s}} \frac{1 - e^{-\frac{2(d_o-d_t)}{s}}}{1 - e^{-\frac{2}{s}}} 
		 + e^{-\frac{6}{s}} \frac{1 - e^{-\frac{3(d_o-d_t)}{s}}}{1 - e^{-\frac{3}{s}}} + \cdots \right).
	\end{align*}
\end{proof}

\Cref{fig:hittingtimes} shows that the analytic expression for the hitting time
is very close to the empirical average hitting time if enough terms are included
in the sum in \cref{eq:sum}. This again shows that the Markov model
approximates the actual dynamics of edit distances. For the analytic hitting
times, we kept terms in the sum larger than machine precision;
this gave 9 terms for rate $s=1$ and 27 terms for rate $s=10$. \rev{\Cref{fig:hittingtimes} also shows how differently our model behaves for different $s$ in terms of the spread of actual hitting times. \Cref{fig:hittingtimesformula} quantitatively shows how the expected hitting time grows as a function of $s$ and how the number of terms needed in \cref{eq:sum} to reach machine precision grows as a function of $s$. If $s$ is relatively small, the terms in \cref{eq:sum} decay rapidly and the hitting time can be accurately approximated with a summation of a few terms, regardless of the target or initial edit distance.}


\begin{figure}
	\centering
	\includegraphics[width=0.495\columnwidth]{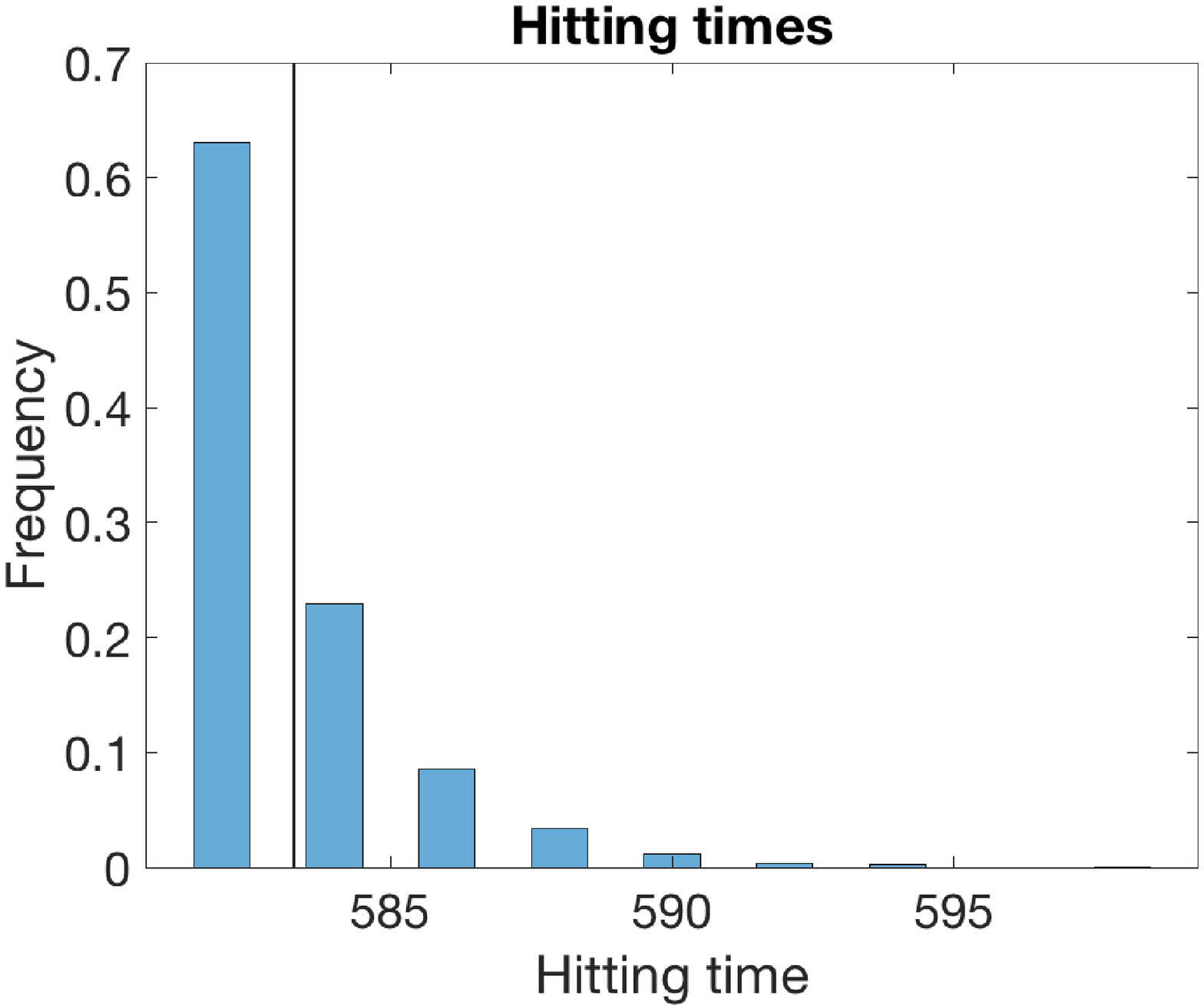} 
	\includegraphics[width=0.495\columnwidth]{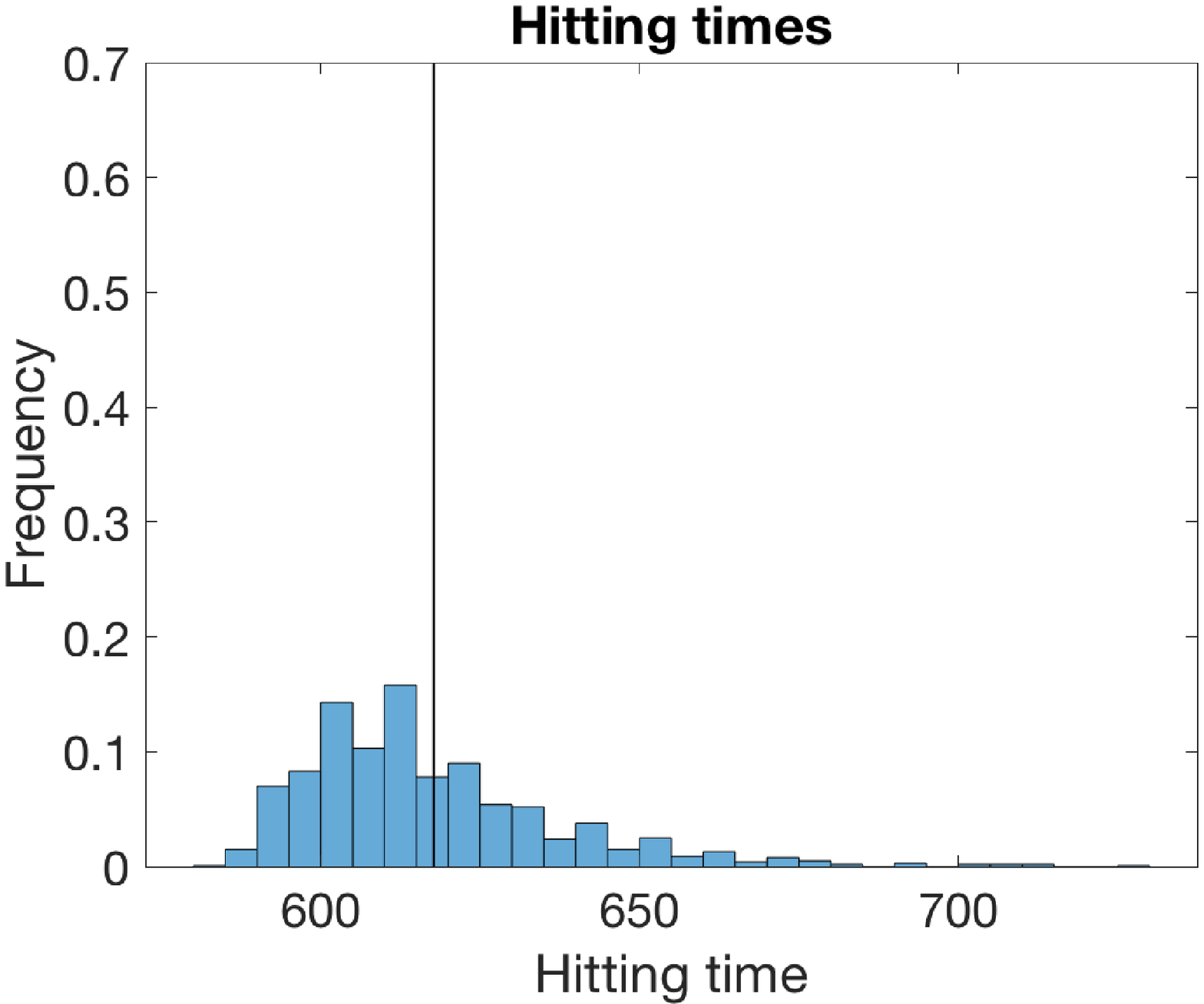}
	\caption{Histograms of hitting times of an evolution of a random graph to a 2-community target graph. Each histogram corresponds to a rate parameter ($s=1$, left; and $s=10$, right) and is based on 1000 trials. 
	The corresponding analytic hitting times from \cref{thm:hittingtimes} 
	are shown in black lines; the empirical mean hitting times differ from them by less than 0.1\%. The target distance $d_t$ is set to 10.}
	\label{fig:hittingtimes}
\end{figure}

\begin{figure}
	\centering
	\includegraphics[width=0.495\columnwidth]{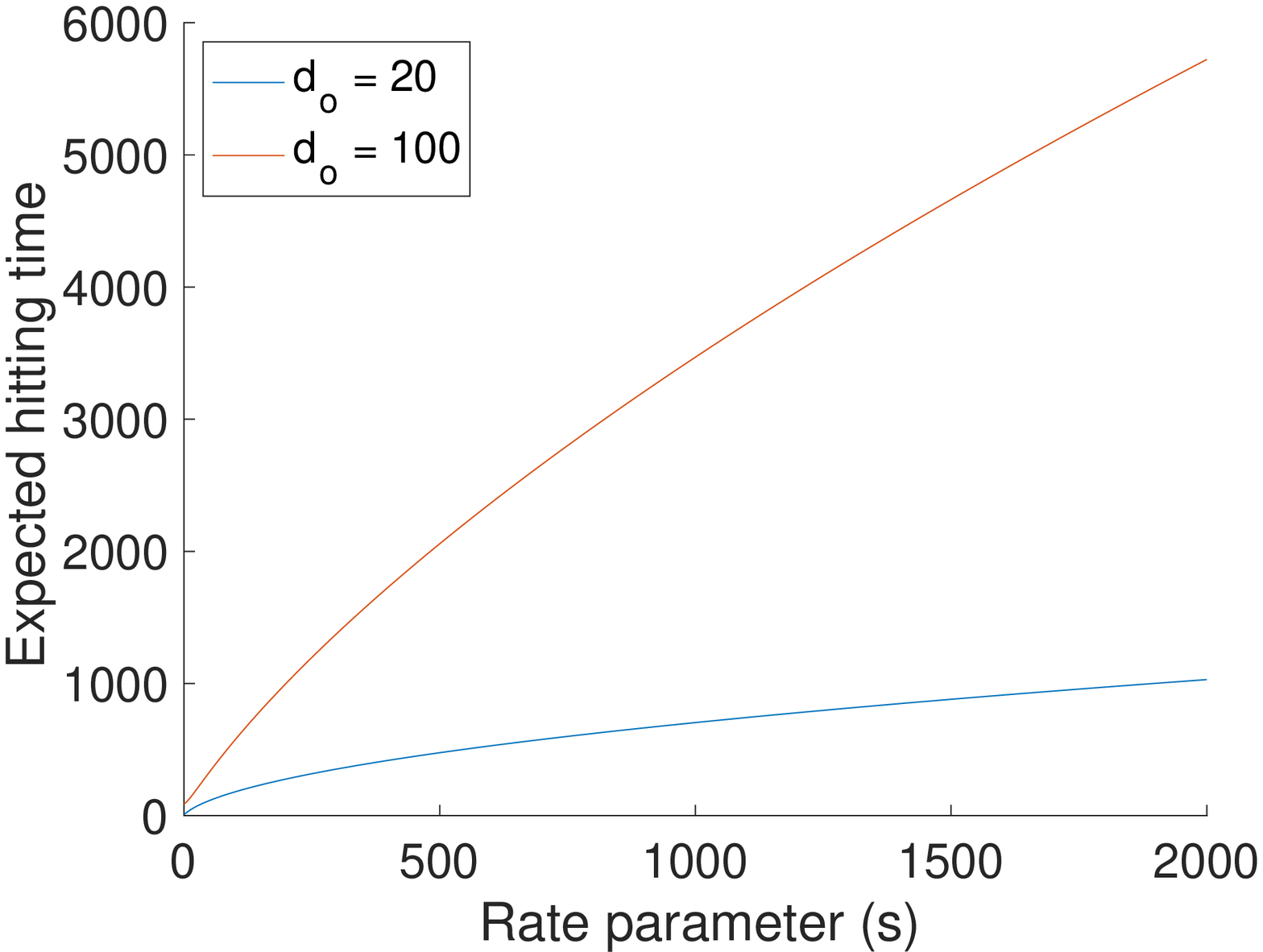} 
	\includegraphics[width=0.495\columnwidth]{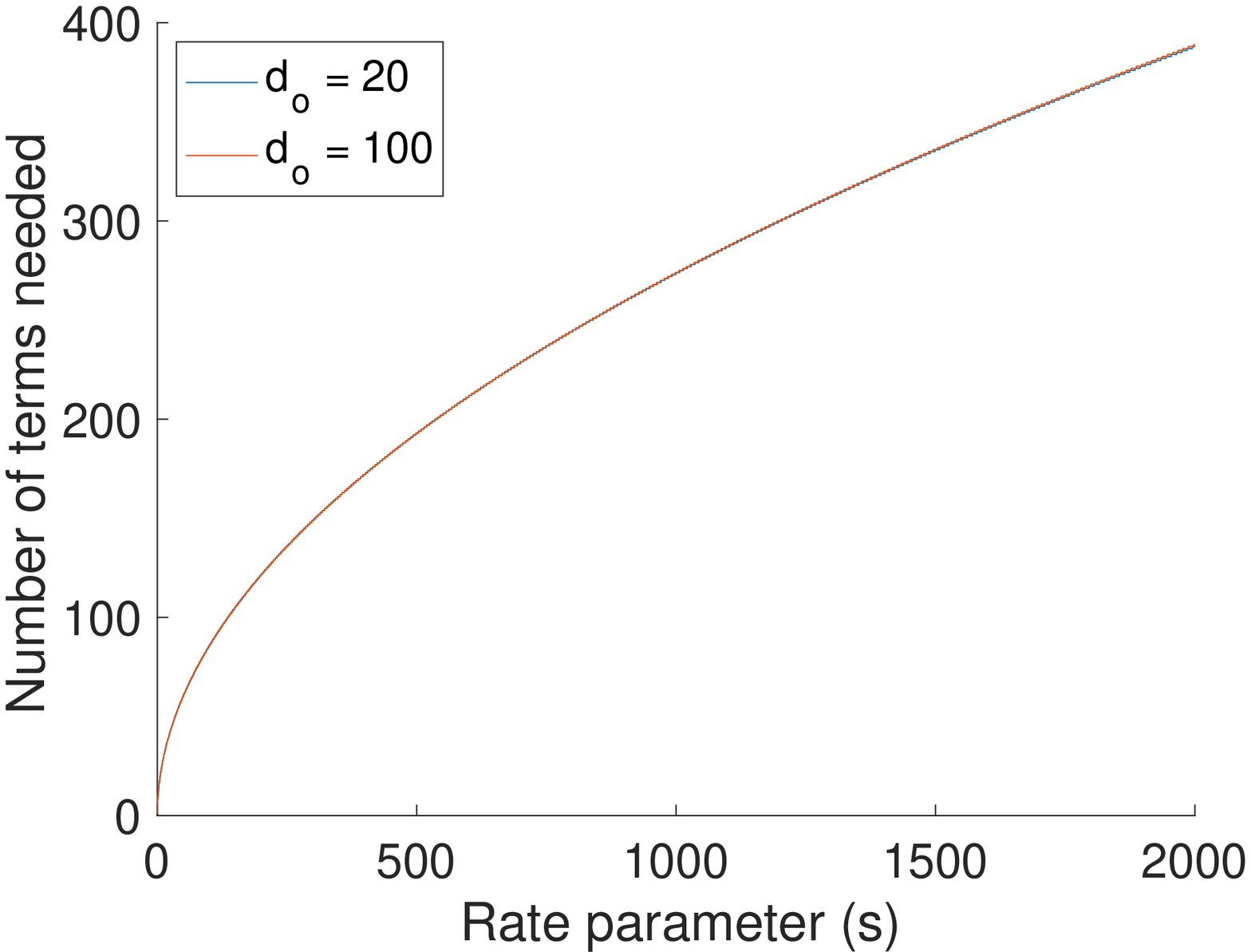}
	\caption{\rev{Expected hitting time (left) and number of terms needed in the sum to reach machine precision (right) according to the analytic expression in \cref{thm:hittingtimes}, plotted as a function of the rate parameter $s$. The target edit distance $d_t$ is set to 10, and two different initial edit distances $d_o$ to the target graph are considered ($d_o = 20$ and $d_o=100$).}}
	\label{fig:hittingtimesformula}
\end{figure}

\subsubsection{The Markov chain on the space of graphs}
The previous two results concerned a Markov chain on the space of edit
distances, but our model is actually a Markov chain on the space of all possible
graphs with a fixed number of vertices. Here we analyze the
set of graphs that the model traverses.
The exact transition probabilities between the graphs depends on how the
advancing and regressing moves are chosen at each time step, but we can
make some general statements. An $n$-vertex graph $G$ can be viewed as a
$\binom{n}{2}$-dimensional boolean vector indexed by pairs of nodes,
where each entry of the vector indicates whether $G$ contains the
corresponding edge of the complete graph. Thus the set of graphs can be viewed
as the vertices of the $\binom{n}{2}$-dimensional hypercube, and the Markov
chain on graphs can be viewed as a (biased) random walk on this hypercube. The
edit distance between two graphs is the Hamming distance between their boolean
vector representations, and we say that two graphs are adjacent if the edit distance between them is 1. Let the $i$th \emph{layer} of the hypercube be the set
of all graphs that are edit distance $i$ from the target graph: the target graph
forms layer 0, the graphs formed by deleting an edge or adding an edge to the
target graph form layer 1, and so on. The farthest layer from the target graph is layer $\binom{n}{2}$, which consists of a single graph whose edges are precisely the ones that do not exist in the target graph.

\begin{proposition}
	\label{thm:markovongraphs}
	Suppose that each advancing (resp.\ regressing) move is chosen uniformly at
	random from the set of all possible advancing (resp.\ regressing) moves. Then the
	time-averaged limiting distribution is uniform on all graphs in the same layer.
	
	Suppose that each advancing (resp.\ regressing) move is chosen uniformly at random from the set of all possible advancing (resp.\ regressing) moves, except that no false edges are allowed. Then the time-averaged limiting distribution is supported on the subgraphs of the target graph and is uniform on all such graphs that are in the same layer.
\end{proposition}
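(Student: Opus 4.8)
\emph{Proof proposal.}

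The plan is to lift the one-dimensional edit-distance chain of \cref{1dtransition} to the full chain on graphs by exploiting the permutation symmetry of the dynamics. I would represent each graph by its \emph{disagreement vector} $y \in \{0,1\}^{d_m}$, where $y_e = 1$ exactly when edge slot $e$ disagrees with $H$; then the edit distance is the Hamming weight $|y|$, the $d$th layer is the Hamming sphere $\{|y| = d\}$, and the target $H$ corresponds to $y = 0$. In these coordinates the transition rule depends only on $|y|$ and treats all $1$-bits (advancing candidates) and all $0$-bits (regressing candidates) interchangeably: at weight $d$ we flip a uniformly random $1$-bit with probability $\varphi(d)$ and a uniformly random $0$-bit with probability $1-\varphi(d)$. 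The key structural fact is therefore that the transition kernel $P$ is invariant under the symmetric group $S_{d_m}$ permuting the coordinate slots, i.e.\ $P(\sigma y, \sigma y') = P(y,y')$ for every permutation $\sigma$.

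For the first statement I would first note that the chain is irreducible on all of $\{0,1\}^{d_m}$ (since $\varphi(d)\in(0,1)$ for $0<d<d_m$, while $\varphi(0)=0$ and $\varphi(d_m)=1$ merely reflect the walk inward), so it has a unique stationary vector $\pi$, which by \cite{KS} coincides with the time-averaged limiting distribution despite the period-$2$ structure. Because $P$ is $S_{d_m}$-invariant, for any $\sigma$ the measure $\pi\circ\sigma^{-1}$ is again stationary, so uniqueness forces $\pi = \pi\circ\sigma^{-1}$; since $S_{d_m}$ acts transitively on each Hamming sphere, $\pi$ must be constant on every layer, which is the claimed uniformity. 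Matching the total mass on layer $d$ to the one-dimensional stationary distribution $v$ then identifies the constant as $v_d\big/\binom{d_m}{d}$. As a consistency check one can instead verify the balance equation directly: a graph in layer $d$ has exactly $d_m-d$ in-neighbors in layer $d+1$ and $d$ in-neighbors in layer $d-1$, with transition weights $\varphi(d+1)/(d+1)$ and $(1-\varphi(d-1))/(d_m-d+1)$, and the identities $\binom{d_m}{d}\frac{d_m-d}{d+1}=\binom{d_m}{d+1}$ and $\binom{d_m}{d}\frac{d}{d_m-d+1}=\binom{d_m}{d-1}$ collapse the layer-uniform ansatz exactly onto the stationarity equation $v_d = v_{d-1}(1-\varphi(d-1)) + v_{d+1}\varphi(d+1)$ for \cref{1dtransition}.

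For the second statement the essential new feature, and the step I expect to be the main obstacle, is that forbidding false edges destroys irreducibility: a non-$H$ edge present in the current graph can only ever be deleted (by an advancing move) and never re-added, so the number of such outdated edges is non-increasing along every trajectory. I would exploit this monotonicity to show that the subgraphs of $H$ form a closed communicating class while every graph carrying at least one outdated edge is transient — from any such graph there is a positive-probability sequence of advancing moves deleting all outdated edges and entering the subgraph-of-$H$ class, which is never left, so no communicating class outside it is closed. Hence the time-averaged limiting distribution is supported on subgraphs of $H$. Restricted to that class, the disagreements live only in the $|H|$ coordinate slots corresponding to edges of $H$ (the remaining slots are frozen at $0$), and the same flip-a-uniform-bit dynamics is now $S_{|H|}$-invariant; repeating the symmetry-plus-uniqueness argument on the irreducible restricted chain yields uniformity on each layer of subgraphs. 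Notably, this symmetry argument bypasses the fact, flagged in the footnote to \cref{sec:analysis}, that the induced edit-distance process is no longer exactly Markovian. The delicate points to treat carefully are precisely the boundary states where no regressing move exists (e.g.\ the empty subgraph, where deleting an $H$-edge is impossible); these must be assigned a well-defined, $S_{|H|}$-invariant inward-reflecting rule so that the restricted chain is genuinely irreducible, after which the symmetry argument applies verbatim.
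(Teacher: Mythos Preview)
Your proposal is correct and follows essentially the same route as the paper's own proof: both arguments identify the chain on graphs with a walk on the hypercube $\{0,1\}^{d_m}$, use the coordinate-permutation symmetry (what the paper calls an ``automorphism of the Markov chain'') to force the stationary distribution to be constant on each Hamming layer, and then handle the no-false-edges variant by observing that states carrying non-$H$ edges are transient so the long-run chain lives on subgraphs of $H$. Your version is more explicit about irreducibility, uniqueness, and the monotonicity of outdated edges, and you add a direct balance-equation check, but the core idea is the same.
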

\begin{proof}
	Note that a graph in layer $i$ is adjacent to $i$ graphs in layer $i - 1$ and $\binom{n}{2}-i$ graphs layer $i + 1$. Thus, if $p_i$ is the probability that a graph in layer $i$ advances to a graph in the next lowest layer, then the Markov chain described in the first part of the statement can be described
	as follows: for $0 < i < \binom{n}{2}$, each graph in layer $i$ has a
	probability $p_i / i$ of transitioning to an adjacent graph in layer $i - 1$
	and probability $(1-p_i) / \left( \binom{n}{2}-i \right)$ of transitioning to an adjacent graph in layer $i + 1$.
	(The graph in layer $0$ has probability $1 / \binom{n}{2}$ of transitioning to each
	graph in layer $1$, and the graph in layer $\binom{n}{2}$ has a probability
	$1 / \binom{n}{2}$ of transitioning to each graph in layer $\binom{n}{2}-1$.)  
	The conclusion of the first part of the proposition holds
	because there is an automorphism of the Markov chain that sends a given vertex
	in a layer to any vertex in the same layer. (Any permutation of the vertices in
	layer 1 defines a permutation of the indices of each boolean vector and thus
	defines an automorphism of the hypercube graph.)
	For the second part, the graphs that contain edges not in the target
	graph are transient states in the Markov chain, and so in the long term the
	Markov chain looks like the chain described in the first part but supported on
	the subgraphs of the target graph.
\end{proof}

\section{Synthetic experiments}\label{sec:synthetic}
We now investigate examples of our model where the target graph is
structurally different from the starting graph. This experiment models forensic
graph analysis\textemdash given structurally distinct starting and target
graphs, can we build a sequence of graphs between them that captures the change
and subsequently analyze the behavior of that sequence to detect the structural
change? We use synthetic graphs in this section as they have especially clear structure, but we consider real-data graphs in \Cref{sec:experiments}.


We consider the stochastic block model (SBM) \cite{sbm}, a widely used model for planted graph clustering. In this model $n$ vertices are divided into clusters and two vertices are independently connected by an edge with probability $p$ if they are in the same cluster and smaller probability $q$ if they are in different clusters. We analyze two scenarios where a starting 2-block graph evolves into a target 3-block graph. This allows us to explore how community structure affects the graph transition. In the first scenario, one block in the starting 2-block graph splits into 2 equally sized blocks of half the size. In the second scenario, the 3 blocks in the target graph are independently chosen and have no a priori relation to the 2 blocks in the starting graph.

For the experiment, we took $n=120$ and set $p=0.9$ and $q=0.1$ for all the block structures. We ran both evolutions with $s = 1$ and target distance $0$ and stopped the
model as soon as it reached the target graph. (The target distance was $0$ and so we
stopped when the edit distance was first $0$.) We measured the recovery rate
(the fraction of nodes correctly classified) over time for the 3 clusters in the
target graph using a spectral clustering algorithm~\cite{Damle18}. We also
measured the \emph{subspace distance} between the space spanned by eigenvectors
associated with the three largest eigenvalues of the symmetrically normalized
adjacency matrix at each step and the analogous subspace of the symmetrically
normalized adjacency matrix of the final target graph. The subspace distance between two
subspaces represented by matrices $U$ and $V$ with orthonormal columns is
$\sqrt{1-\sigma_{\text{min}}(U^TV)^2}$~\cite{GVL}. This measures how
well the dominant invariant subspaces of the current and target graph align
and therefore if we expect a spectral algorithm to be able to pick up on the
structure of the target graph.

\Cref{fig:twotothreeblockrecovery} shows the results of this experiment, and
both scenarios display striking behavior. There is a point at which
the subspace distance sharply declines, causing the recovery rate to sharply
increase. The first scenario requires noticeably less time to detect the graph
transition, presumably because the community structure of the target graph is
related to the starting graph. 
Although the experiment is based on synthetic data, it suggests the capabilities of our model to transition from one structure to another, and the availability of algorithms to detect structural changes as they occur.


From the same experiment, we looked at the spectrum of the symmetric normalized adjacency matrix of the interpolating graphs in both scenarios and compared it to the spectrum of an algebraic interpolation between the normalized adjacency matrices of the starting and target graphs, where we ignore the graphs and simply interpolate linearly at equally spaced intervals between the two matrices. \Cref{fig:twotothreeblockeigenvalues1} shows the resulting spectra for the first scenario, and \Cref{fig:twotothreeblockeigenvalues2} shows the resulting spectra for the second scenario. In both figures, the spectra from the graph interpolation and linear matrix interpolation are strikingly similar, suggesting that our model can produce plausible spectral information while having the advantage of providing a sequence of graphs producing that spectral information. This is valuable since eigenvalue interpolation is in general known to be a difficult problem, especially in situations where the eigenvalues cross as in the second scenario. 

\begin{figure}
	\includegraphics[width=0.495\columnwidth]{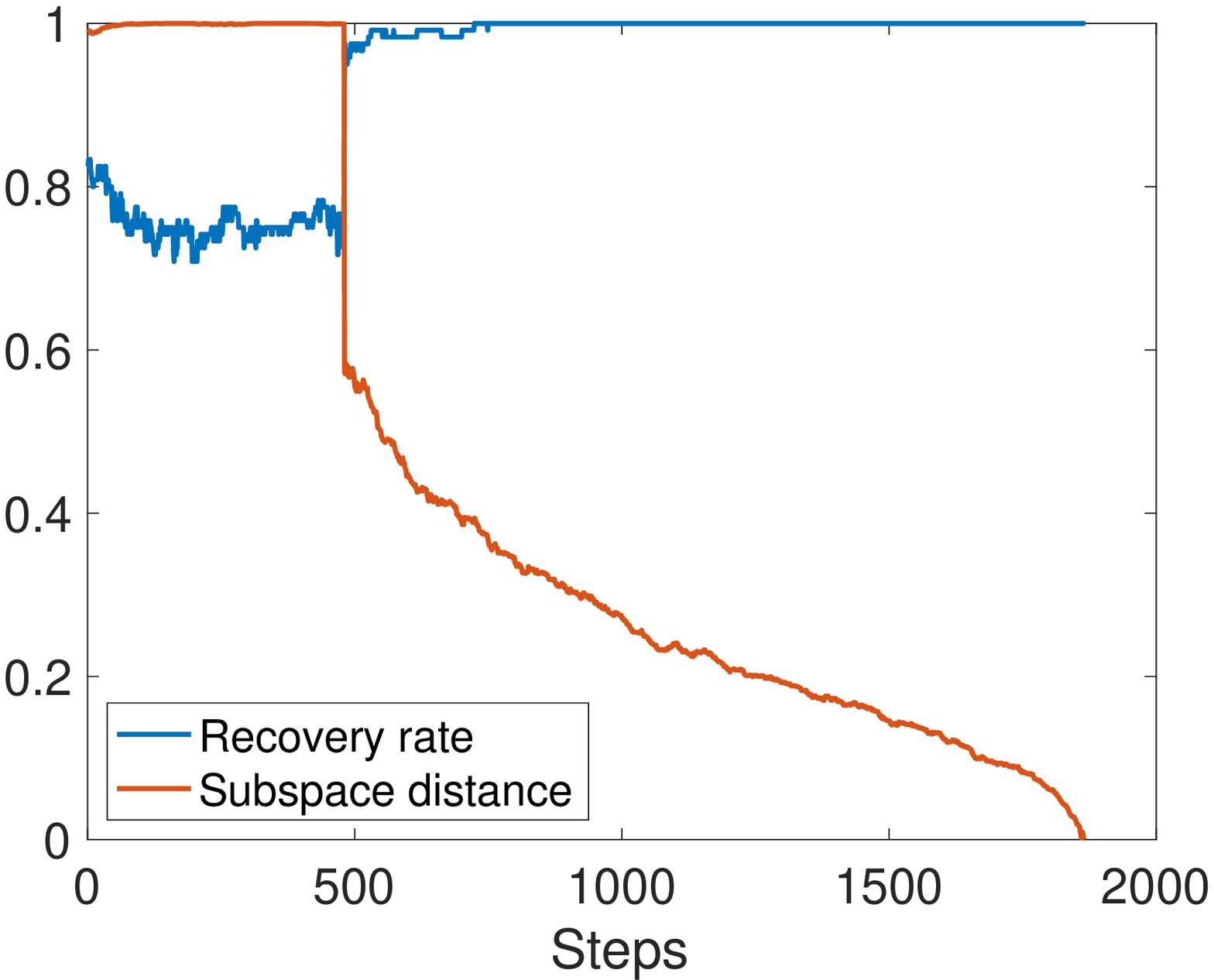}
	\includegraphics[width=0.495\columnwidth]{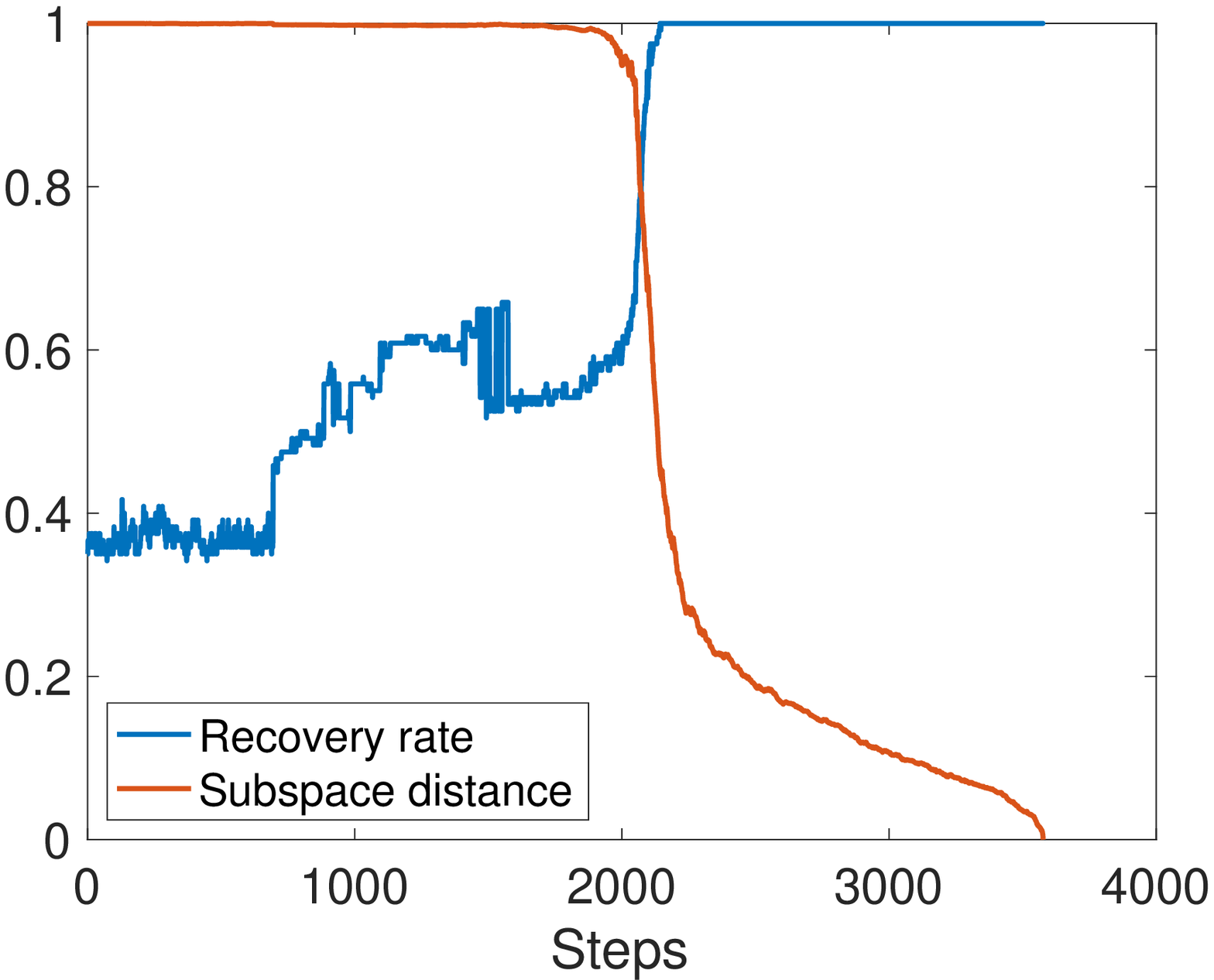}
	\caption{Cluster recovery rate of a 2-block graph evolving into a
		3-block graph in two different scenarios. The recovery rate under
		spectral clustering is shown in blue, while the subspace
		distance between the current eigenspace and the final eigenspace is
		shown in red. Left: one block splits into 2 blocks.
		Right: The 3 blocks in the target graph are independent of the 2 blocks in the starting graph.}
	\label{fig:twotothreeblockrecovery}
\end{figure}

\begin{figure}
 	\includegraphics[width=0.495\columnwidth]{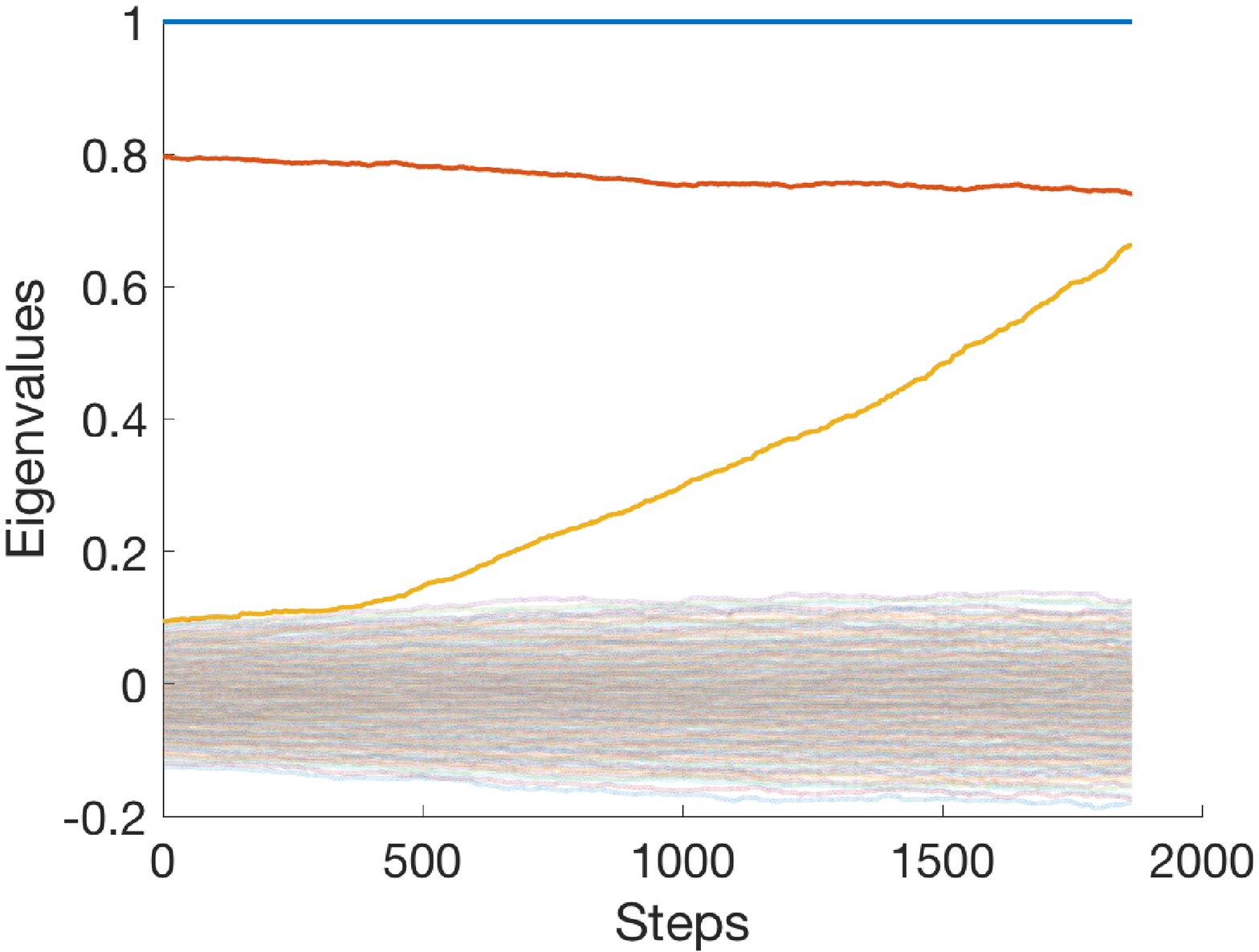}
	\includegraphics[width=0.495\columnwidth]{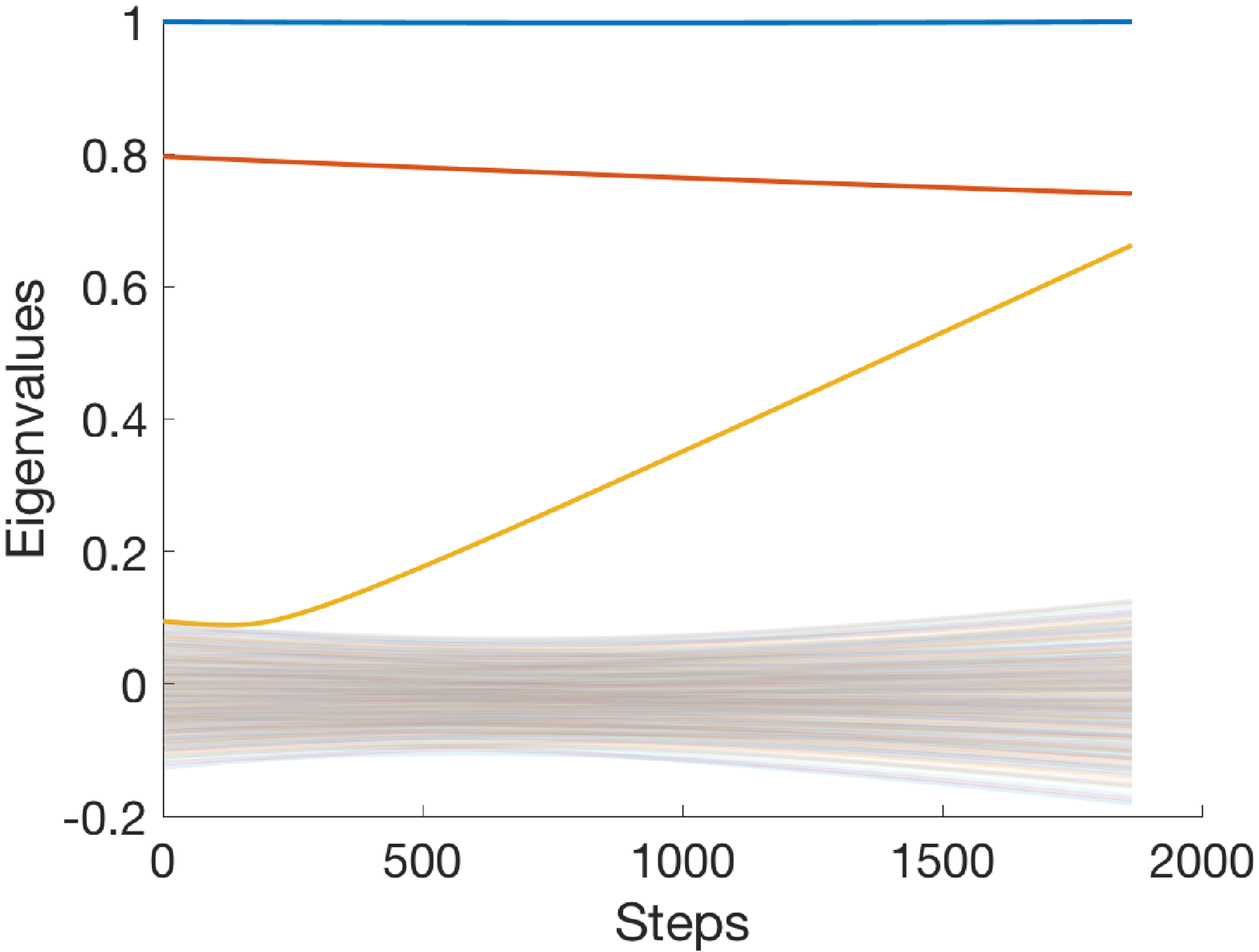}
	\caption{Spectrum of the symmetrically normalized adjacency matrix of a 2-block graph evolving into a 3-block graph, where one block splits into 2 blocks. 
	Left: Graph interpolation.
	Right: Equispaced linear interpolation between matrices. The full spectrum is computed at every step of the graph interpolation (resp.\ every increment of the equispaced linear interpolation), and the most significant eigenvalues are highlighted to distinguish them from the bulk.}
	\label{fig:twotothreeblockeigenvalues1}
\end{figure}

\begin{figure}
	\includegraphics[width=0.495\columnwidth]{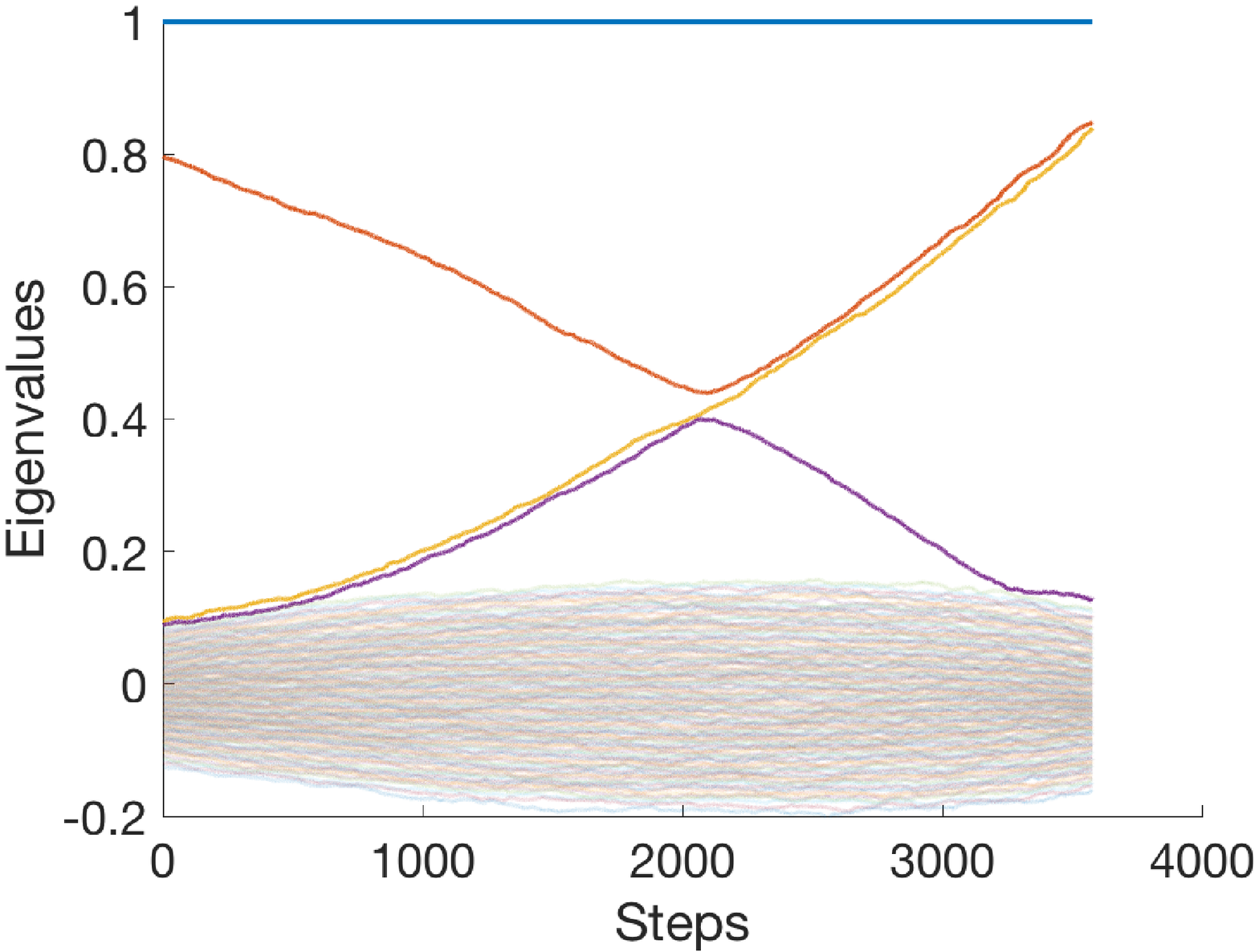}
	\includegraphics[width=0.495\columnwidth]{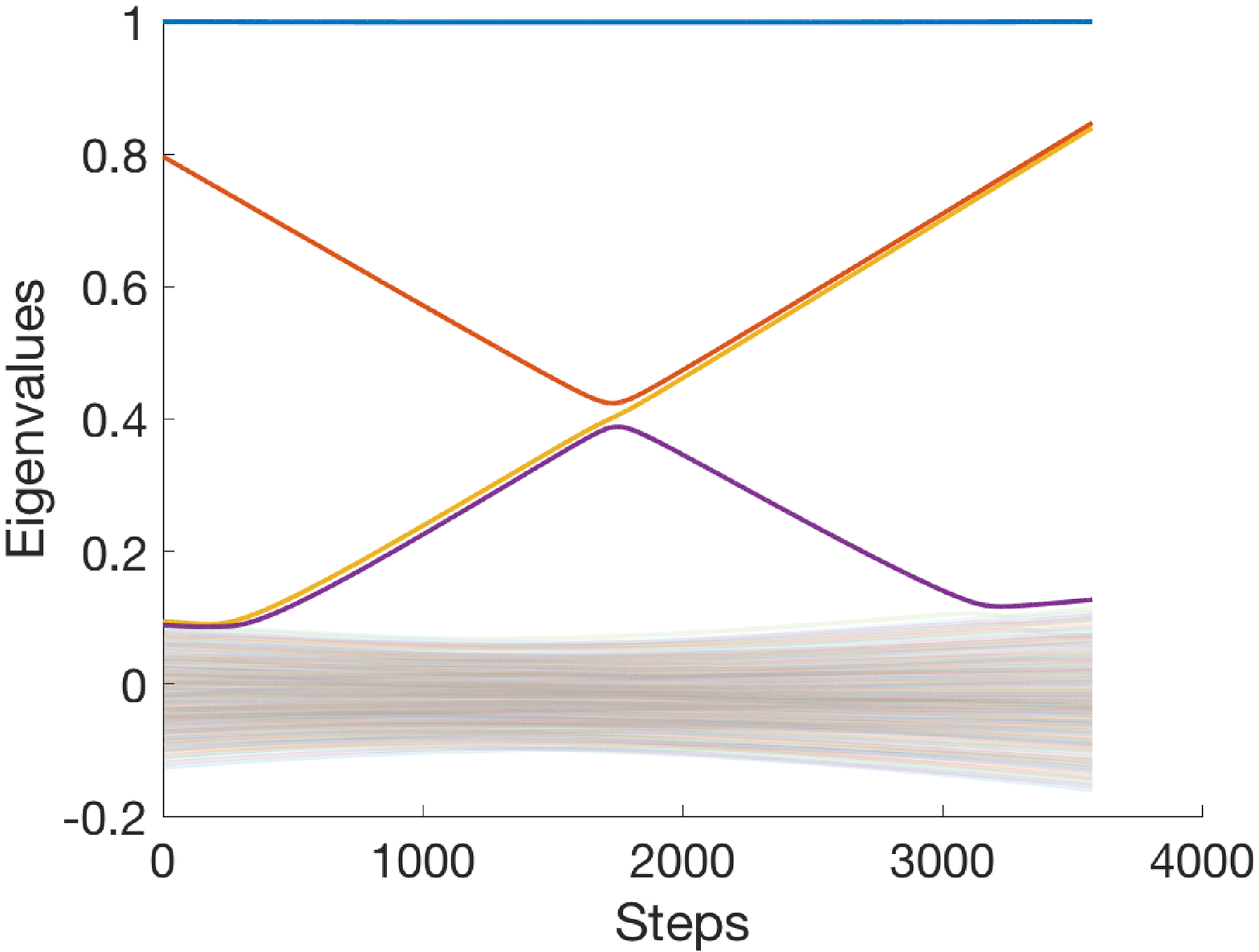}
	\caption{Spectrum of the symmetrically normalized adjacency matrix of a 2-block graph evolving into a 3-block graph, where the 3 blocks in the target graph are independent of the 2 blocks in the starting graph.
	Left: Graph interpolation.
	Right: Equispaced linear interpolation between matrices. The full spectrum is computed at every step of the graph interpolation (resp.\ every increment of the equispaced linear interpolation), and the most significant eigenvalues are highlighted to distinguish them from the bulk.}
	\label{fig:twotothreeblockeigenvalues2}
\end{figure}

\section{Real-data experiments}
\label{sec:experiments}

We now interpolate between snapshots of several real-world networks,
demonstrating that our model can provide a sensible graph
interpolation where graphs and their properties vary ``smoothly,'' even in the absence of any data between the snapshots.
To interpolate between two graphs, we set the first graph to be the starting graph
and the second graph to be the target graph. We then set the target edit
distance to be $0$ and ran our interpolation with a specified rate
parameter until the target graph is reached.
To interpolate between multiple graphs, we do this procedure for the first and
second graphs, then the second and third graphs, and so on. We then compare our
interpolation with extrapolation methods (growth models) as described below. 
\Cref{tab:datasets} summarizes the datasets we use.

\begin{table}
	\centering
	\caption{Summary statistics of undirected snapshot graph datasets.}
	\label{tab:datasets}
	\begin{tabular}{r c l l}
		\toprule
		Dataset & \# snapshots & \# nodes & \# edges (max) \\
		\midrule
		CollegeMsg \cite{Panzarasa-2009-CollegeMsg} & 2 & 1899 & 5852 \\
		\rev{email-Eu-core-temporal \cite{Paranjape-2017-motifs}} & \rev{10} & \rev{1005} & \rev{16064} \\
		van de Bunt students \cite{VanDeBunt-1999-friendship} & 7 & 32 & 59  \\
		DBLP coauthorship \cite{Ley-2009-DBLP} & 9 & 554885 & 407575 \\
		\bottomrule
	\end{tabular}
\end{table}

In our first type of example, we consider a strictly growing network and compare
our interpolation with various extrapolations available from network growth models. For
this purpose, we used the first two datasets shown in
\cref{tab:datasets}. The first one is based on private messages on a college
messaging network \cite{Panzarasa-2009-CollegeMsg}. 
The data consists of timestamped edges indicating communication between two individuals. From this, we create a growing
dynamic graph based on the accumulation of these messages over a period of time. We took two
snapshots: the first snapshot is the empty graph and the second
snapshot is the aggregated graph of all communications within the first 30 days of the
dataset.

\begin{figure}
	\begin{subfigure}{0.495\columnwidth}
		\includegraphics[width=0.95\linewidth]{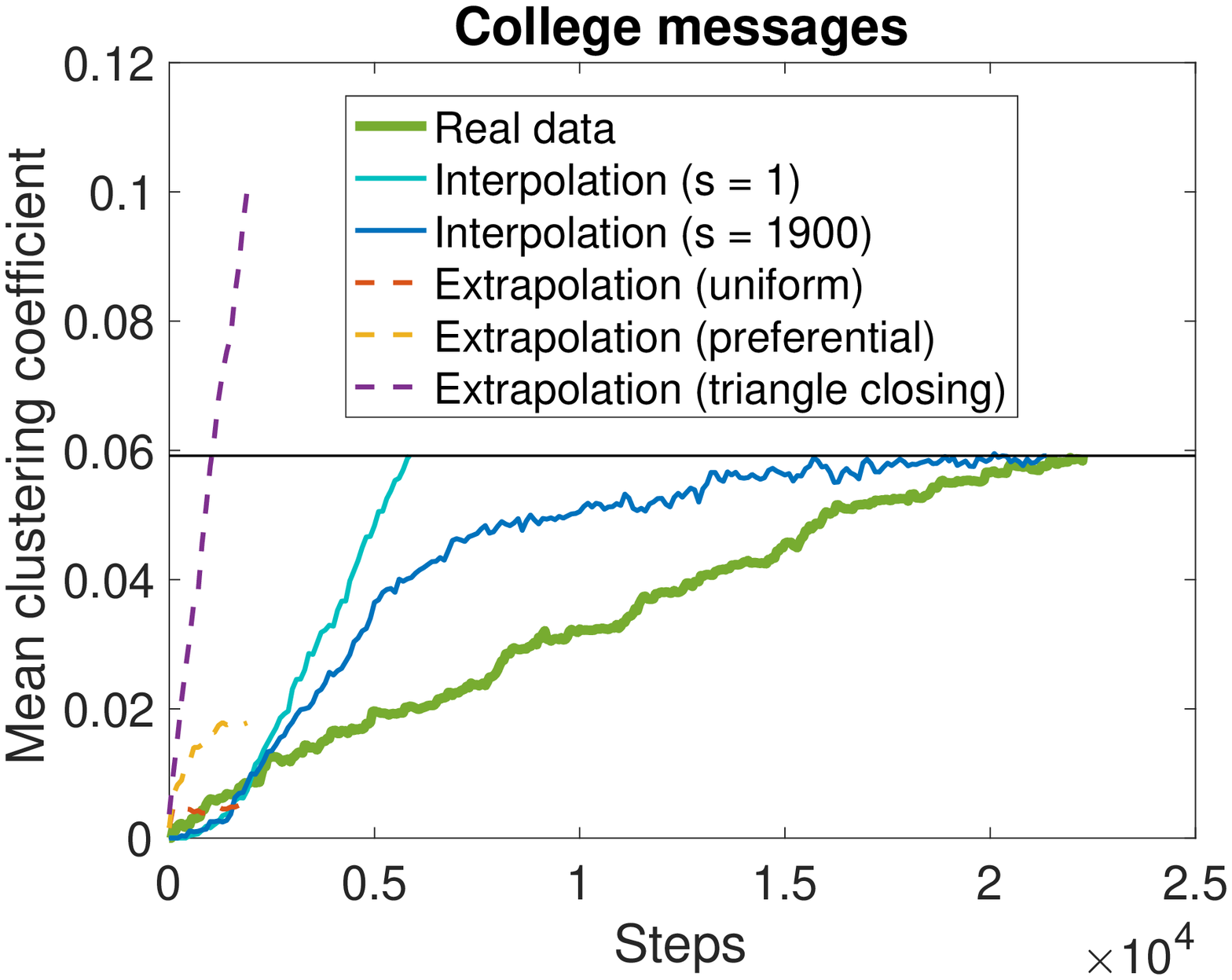}
	\end{subfigure}%
	\begin{subfigure}{0.495\columnwidth}
		\includegraphics[width=0.95\linewidth]{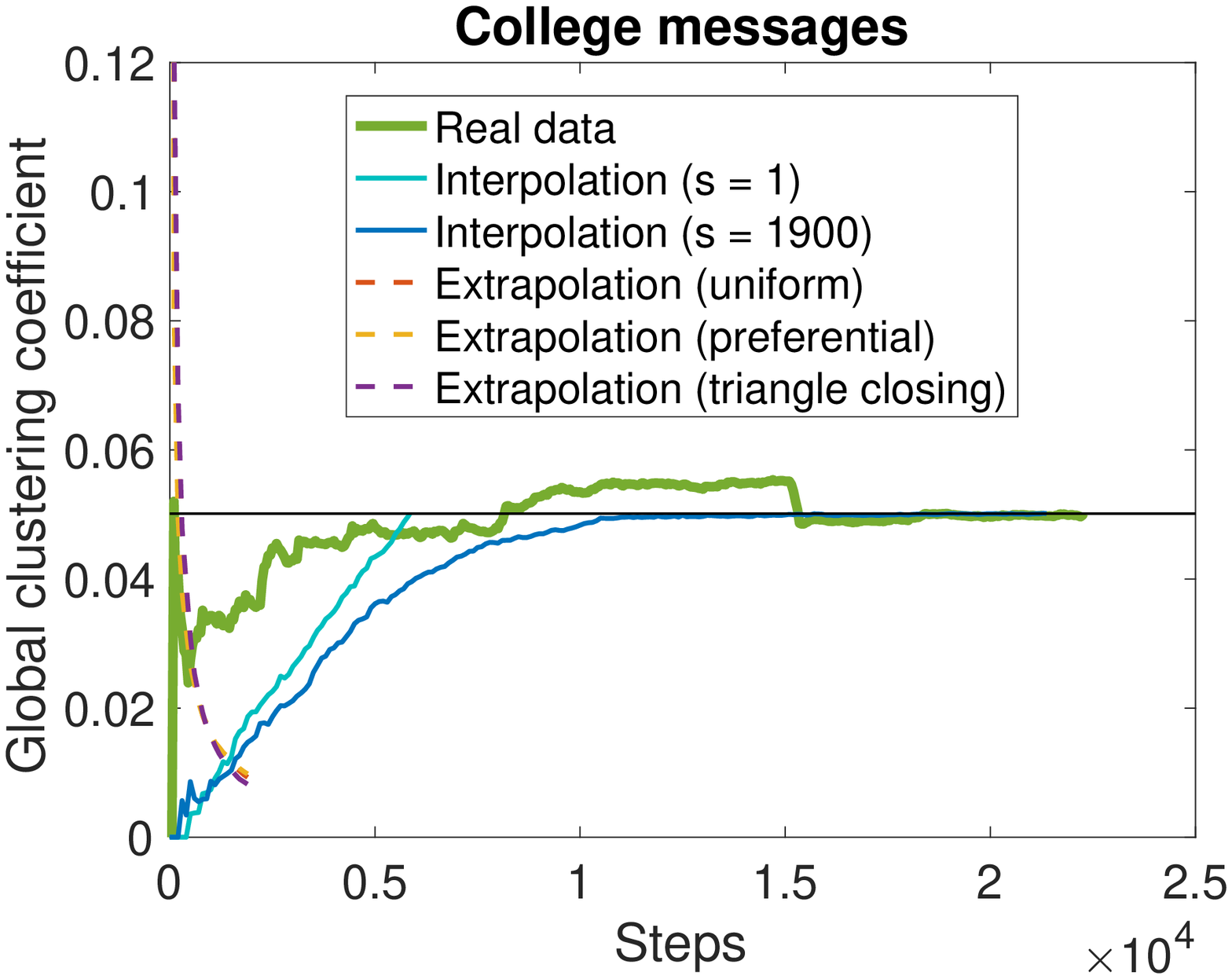}
	\end{subfigure}
	\caption{Evolution of mean and global clustering coefficient in a college message network. 
	The actual evolution is highlighted in \rev{green} along with the interpolation and three extrapolation methods. 
	The horizontal line indicates the clustering coefficient of the final snapshot. 
	The interpolation with rate parameter $s=1900$ corresponds most closely with the real data.
	}    
	\label{fig:collegemsg}
\end{figure}

\Cref{fig:collegemsg} shows the change in the mean and global clustering coefficient in
the actual data over time compared with our graph model interpolations and
three extrapolation methods. The mean clustering coefficient of a graph is the mean local clustering coefficient across all nodes,
where the local clustering coefficient of a node is the number of connected pairs of neighbors of the node divided by the total number of pairs of neighbors
(and is set to 0 if the node has degree 0 or 1). 
The global clustering coefficient of a graph is
 the number of closed length-2 paths divided by the total number of length-2 paths in the entire graph.

The interpolations in \Cref{fig:collegemsg} uses rates $s=1$ and
$s=1900$; the latter was chosen to match most closely with the real data. To elaborate further, we actually know that the real dynamic network took 21812 steps between the snapshots, so using \cref{thm:hittingtimes} we took $s$ to be the multiple of $50$ such that the expected hitting time was closest to the number of steps taken.
In general, a higher rate parameter increases the expected number of steps in the
interpolation, so this parameter can be tuned if an estimate of the number of edits actually taken from one graph to another is known.

The extrapolations are from
three network growth models: uniform attachment, Barab\'{a}si-Albert
preferential attachment \cite{Barabasi99}, and Jackson-Rogers triangle closing
\cite{Jackson07}. These growth models all start with a small initial graph. In
each step, a new node appears and connects to a predefined number of nodes (on
average) in the existing network according to a probabilistic distribution on
the nodes.
For uniform attachment, we start with a clique on some $m$ vertices, and each new
node connects to $m$ nodes chosen uniformly at random in the existing graph. For
preferential attachment, we start with a clique on some $m$ vertices, and each new
node connects to $m$ samples of random nodes in the existing graph, chosen with
probability proportional to their degree. 
For Jackson-Rogers triangle closing, we start with a clique on some $m_r+m_n+1$
vertices. At each step, we pick $m_r$ nodes in the existing graph uniformly at
random and call them ``parent nodes.''  The new node connects to each parent
node independently with probability $p_r$. We also pick $m_n$ nodes
uniformly at random from the parents' immediate neighbors and
connect to each of them independently with probability $p_n$. 

We set the extrapolation parameters to match approximately the average degree of the target graph. For uniform and preferential attachment, the parameter $m$ is the number of nodes that each new node connects to, and so we set $m$ to be the nearest integer to the average degree of the target graph. For the college message network in \Cref{fig:collegemsg}, $m=3$ as can be seen from \Cref{tab:datasets}. The triangle closing model has parameters $m_r$, $p_r$, $m_n$, and $p_n$. The average degree, in expectation, of a graph produced by this model is $m_rp_r + m_np_n$,
so we matched this with the average degree in the dataset. To set
a reasonable choice of parameters, we set $p_r = p_n = 0.5$ and experimentally
adjusted $m_n$ relative to $m_r$ until the mean clustering coefficient of the graph
produced by the model approximately matched that of the target graph. Here we set $m_r = 5$ and $m_n = 1$ (we found that $m_r = 6$ and $m_n = 0$ produced a final mean clustering coefficient that was as low as the uniform attachment model.)


\Cref{fig:collegemsg} shows that the extrapolation methods do not yield the correct final mean clustering coefficient, 
nor even the correct qualitative behavior for the global clustering coefficient,
instead decreasing rapidly from 1 (corresponding to the starting clique with isolated vertices). These existing methods are extrapolating from a starting graph, which is not appropriate for our task. 
This further motivates the use of our interpolation model.

\begin{figure}
	\begin{subfigure}{0.495\columnwidth}
		\includegraphics[width=0.95\linewidth]{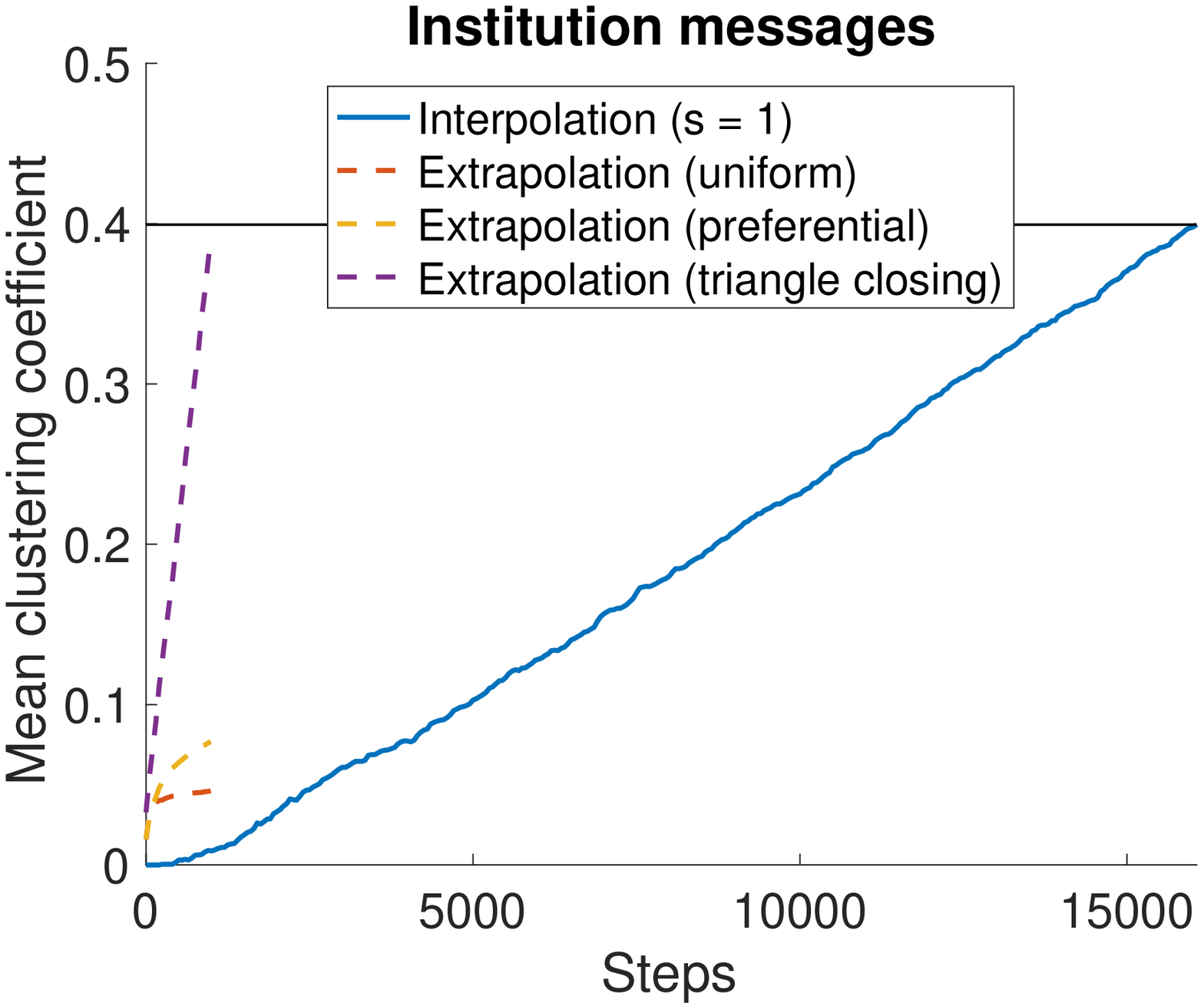}
	\end{subfigure}%
	\begin{subfigure}{0.495\columnwidth}
		\includegraphics[width=0.95\linewidth]{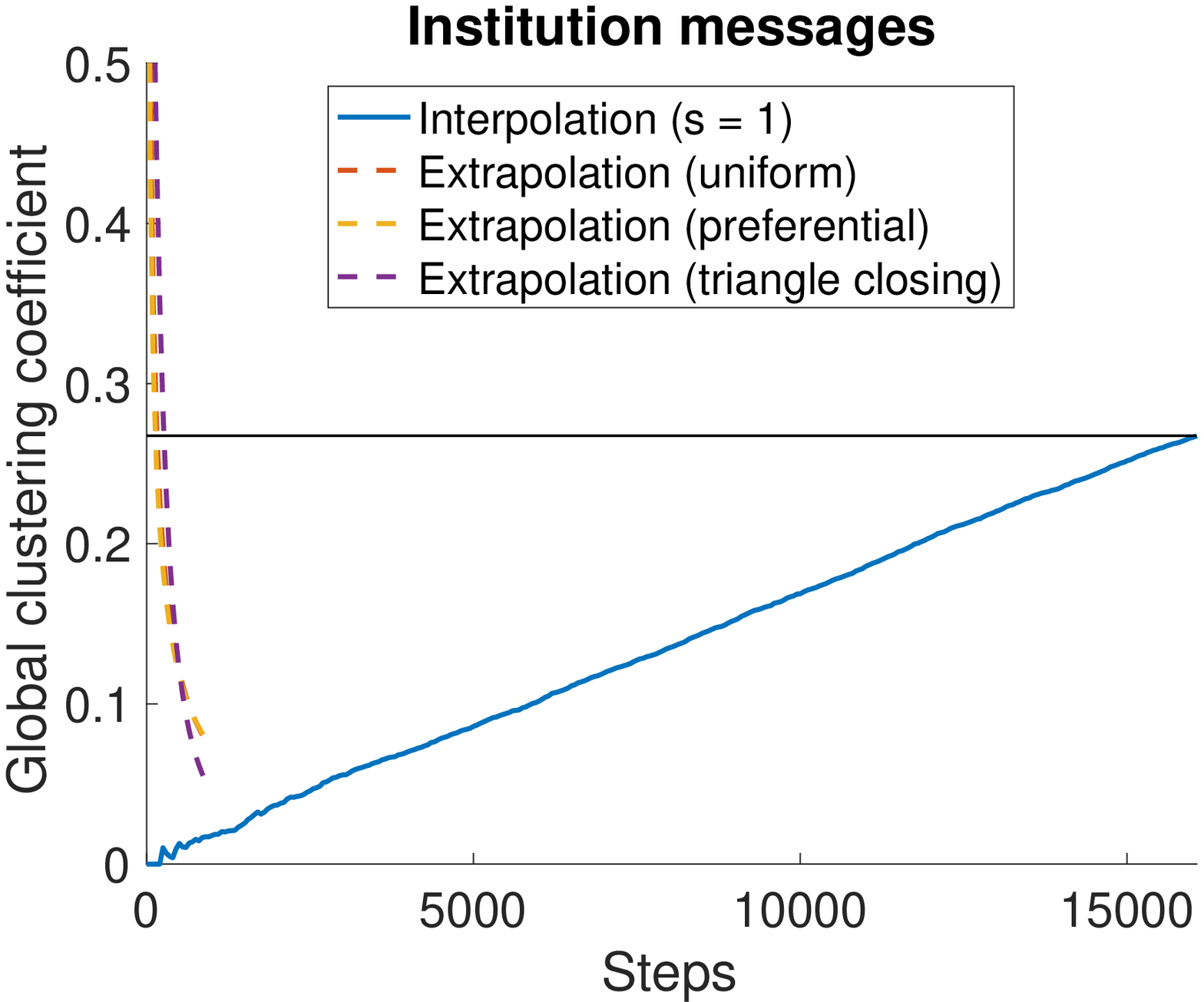}
	\end{subfigure}
	\caption{Evolution of mean and global clustering coefficient on the email-Eu-core network using interpolation and three extrapolation methods. The horizontal line is the clustering coefficient of the final snapshot. Our interpolation better and more naturally fits the data.
	}
	\label{fig:emaileucore}
\end{figure}

\begin{figure}
	\begin{subfigure}{0.495\columnwidth}
		\includegraphics[width=0.95\linewidth]{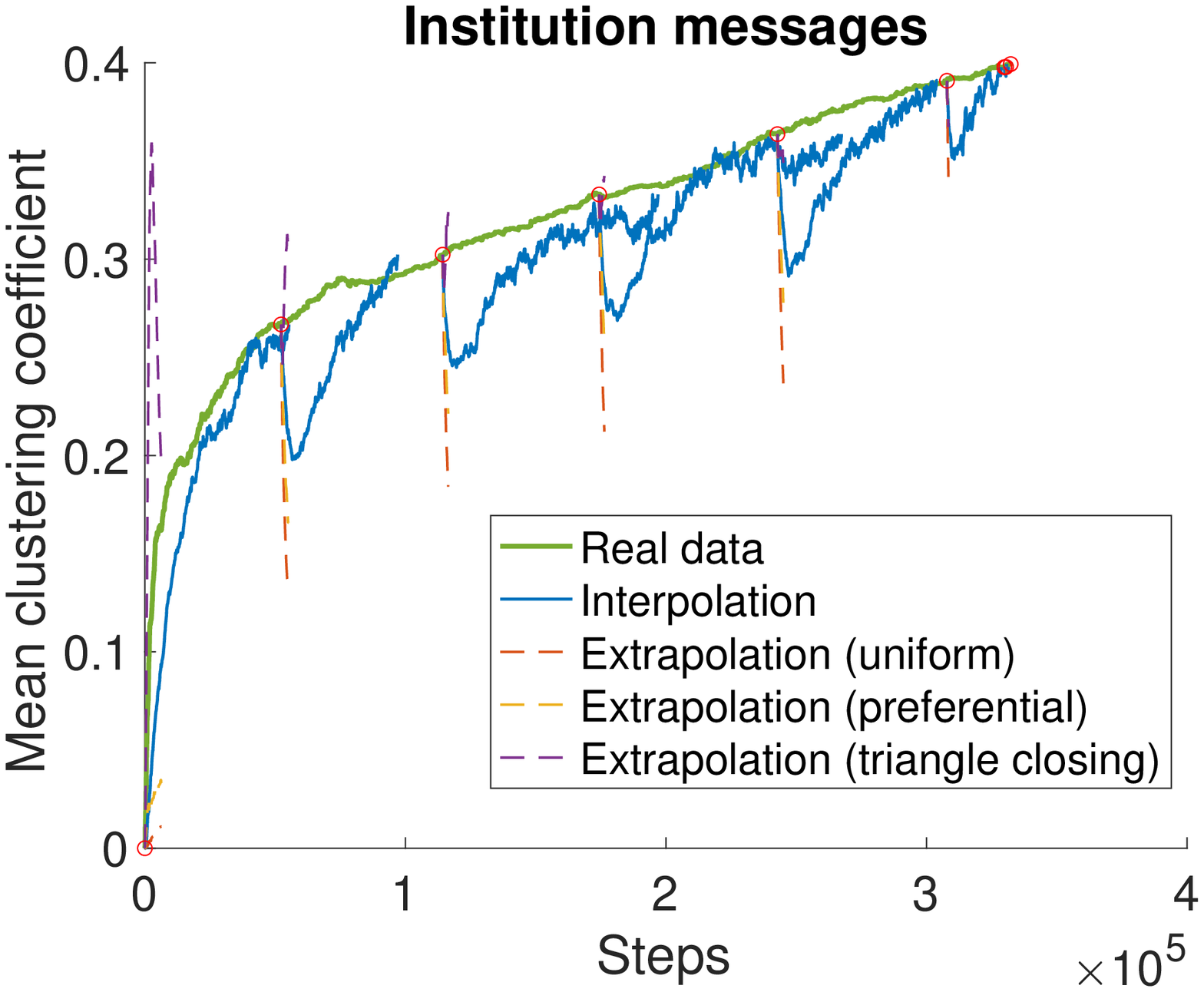}
	\end{subfigure}%
	\begin{subfigure}{0.495\columnwidth}
		\includegraphics[width=0.95\linewidth]{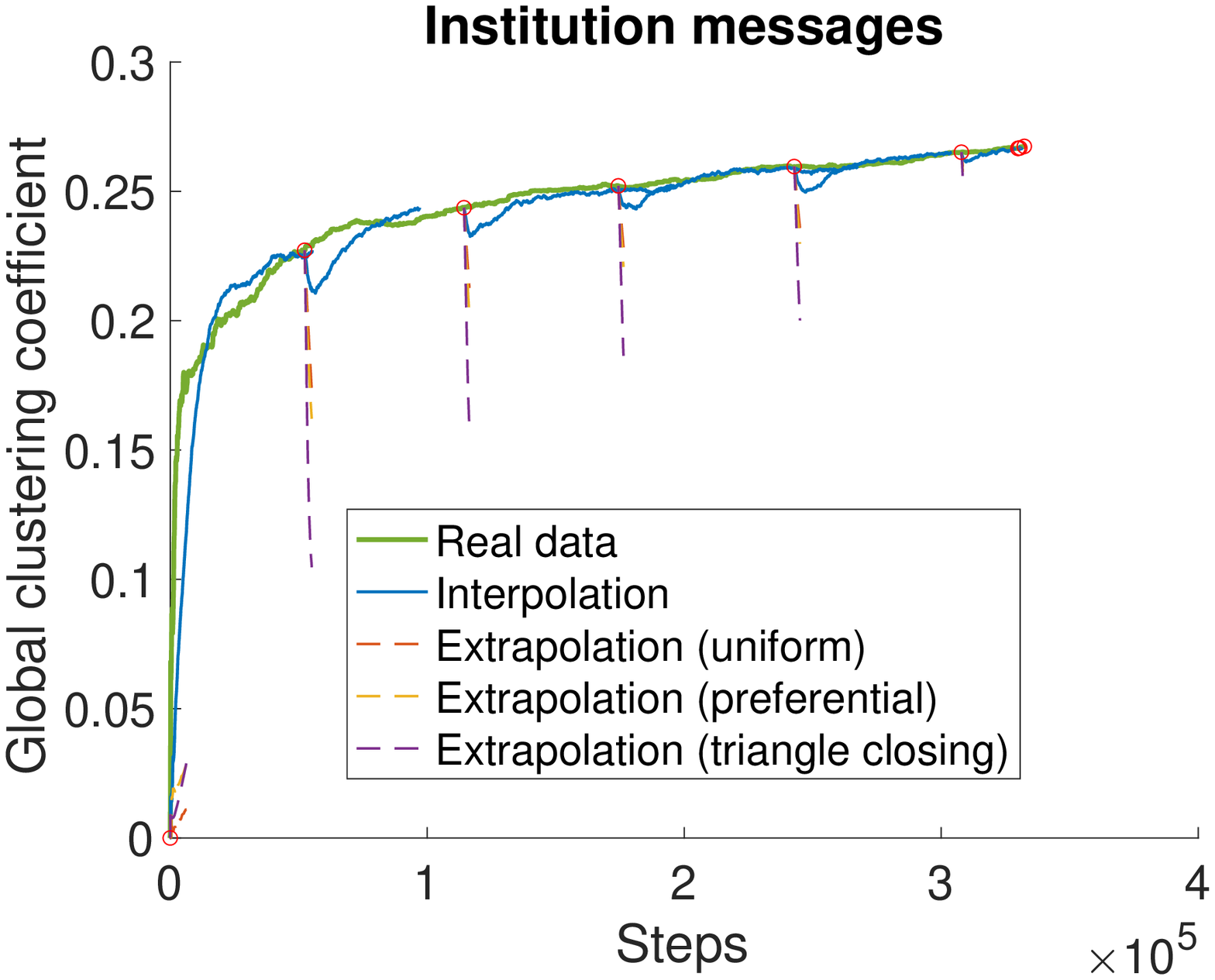}
	\end{subfigure}
	\caption{\rev{Evolution of mean and global clustering coefficient on the email-Eu-core-temporal network. The actual evolution is highlighted in green along with the interpolation and three extrapolation methods. Red circles denote clustering coefficients of the snapshots themselves. The horizontal axis is according to the steps in the real data; this explains the overlapping lines in the interpolation since the number of steps in the interpolation does not align perfectly with the real data.}
	}
	\label{fig:emaileucoretemporal}
\end{figure}

\rev{We repeated our experiments using a dataset of emails at a European research institution~\cite{Paranjape-2017-motifs} with timestamped edges indicating emails exchanged between individuals at the institution. As before, we created a growing dynamic graph based on the accumulation of the email exchanges. We did two experiments: one using just 2 snapshots (the empty graph and the final graph accumulating all the email exchanges) and one using 10 snapshots (details below).}
For the first experiment, \Cref{fig:emaileucore} shows the change in mean and global clustering coefficient for our graph model interpolation with rate $s = 1$ and for the three extrapolation methods just described. For uniform and preferential attachment, we set $m=16$, and for triangle closing, we set $p_r = p_n = 0.5$, $m_r = 6$, and $m_n = 26$. Here we were better able to fit the triangle closing model to match the final mean clustering coefficient of the target. 
However, for the global clustering coefficient, all of the extrapolation methods miss the mark.

\rev{\Cref{fig:emaileucoretemporal} shows our experiments on the same dataset, but where we took a snapshot of the growing graph roughly every 100 days to get 10 snapshots in total. (Because the interval between timestamps is far from uniform, a number of snapshots are clustered near the end of the dataset when the timestamps are ordered chronologically.) For each interpolation, we used \cref{thm:hittingtimes} to choose the best rate parameter up to a multiple of 50 since we know the edit distance between snapshots and the number of steps taken between them. For the extrapolations, we used the same models as before but added edges one at a time. As shown in \Cref{fig:emaileucoretemporal}, only our interpolation produces a plausible result with roughly the correct overall number of steps, and it also best captures the statistical trends of the data over time. One apparent inaccuracy is the dipping behavior near the beginning of each interpolation. This may be due to the assumption of uniformly random edge edits, which degrade the clustering of the graph as edges that get added near the beginning of the interpolation tend not to have any significant connection to the existing edges. This behavior is magnified in a later experiment 
(see \Cref{fig:coauth}).}

\begin{figure}
	\begin{subfigure}{0.495\columnwidth}
		\includegraphics[width=0.95\linewidth]{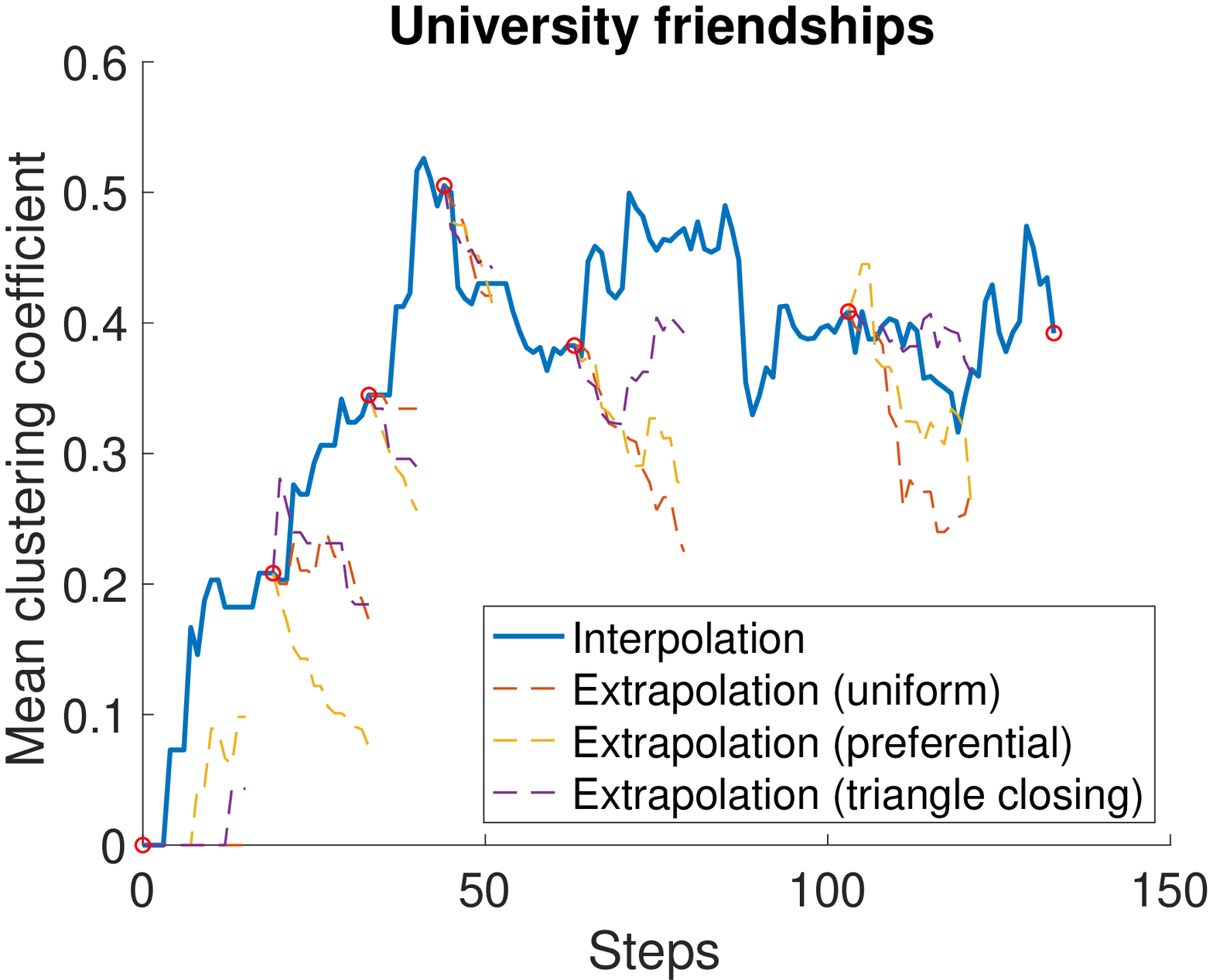}
	\end{subfigure}%
	\begin{subfigure}{0.495\columnwidth}
		\includegraphics[width=0.95\linewidth]{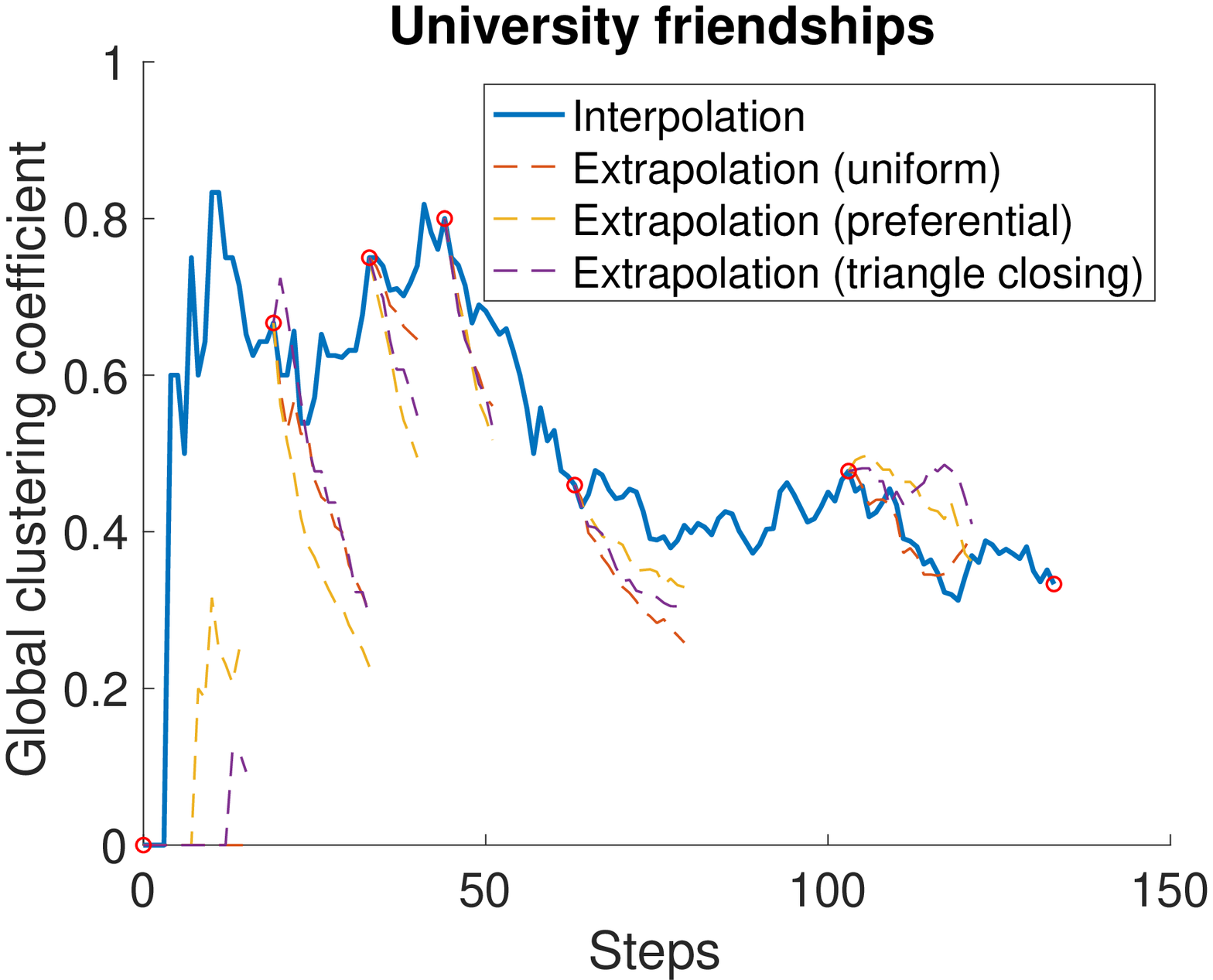}
	\end{subfigure}
	\caption{Evolution of mean and global clustering coefficient in interpolation and three extrapolations between 7 snapshots of university freshmen friendships. The rate parameter $s = 1$ for the interpolation. Red circles denote clustering coefficients of the snapshots themselves.}
	\label{fig:vdbunt}
\end{figure}

Our second type of real-world data example further shows the potential of our
model to produce a feasible interpolation between a series of snapshots. The
datasets we used here are the last two in \Cref{tab:datasets}. The first
dataset consists of 7 snapshots of self-reported friendships between 32
university freshmen in a survey-based experiment \cite{VanDeBunt-1999-friendship}. 
\Cref{fig:vdbunt} displays the
mean and global clustering coefficient over time for our interpolation with rate 1 and compares
it with the same three extrapolation models. 
Although it is less natural to use extrapolation for non-growing
networks, we included a ``decaying'' component as follows. If the number
of edges of the graph increased between snapshots, then we proceed as before with
the usual extrapolation one edge at a time until the number of edges in the extrapolation equals the number of edges in the target snapshot.
If, on the other hand, the number of edges of the graph decreased between snapshots, then in each step of the extrapolation we select a ``new node'' uniformly at random. From this ``new node,'' we choose a neighbor to sever connections with. For uniform
attachment, we delete a uniformly random neighbor; for preferential
attachment, we delete a random neighbor inversely weighted by degree;
and for triangle closing, we delete a random neighbor inversely
weighted by the number of triangles in which the neighbor participates. 

\begin{figure}
	\begin{subfigure}{0.495\columnwidth}
		\includegraphics[width=0.95\linewidth]{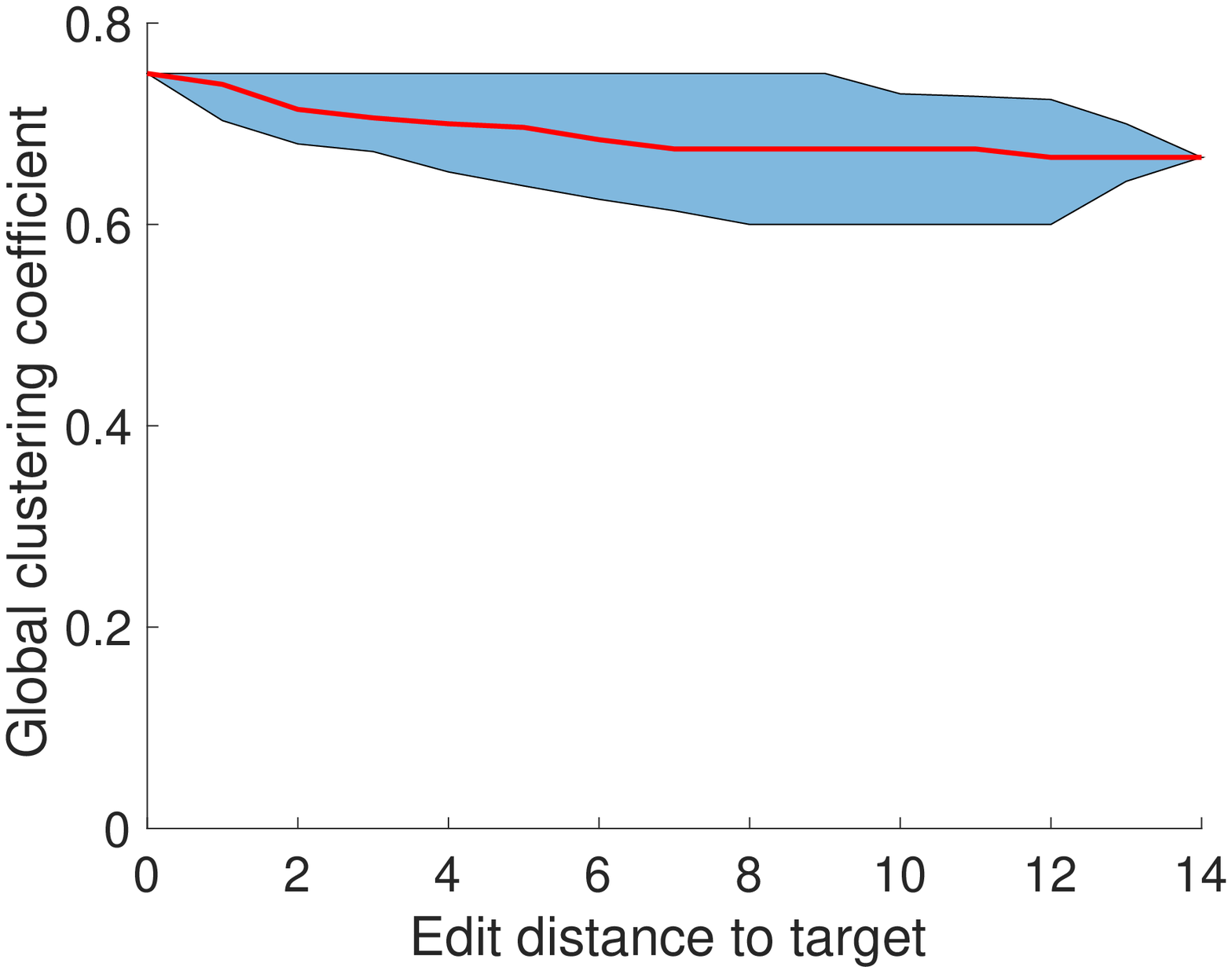}
		\caption{Snapshot 2 to 3.}
	\end{subfigure}%
	\begin{subfigure}{0.495\columnwidth}
		\includegraphics[width=0.95\linewidth]{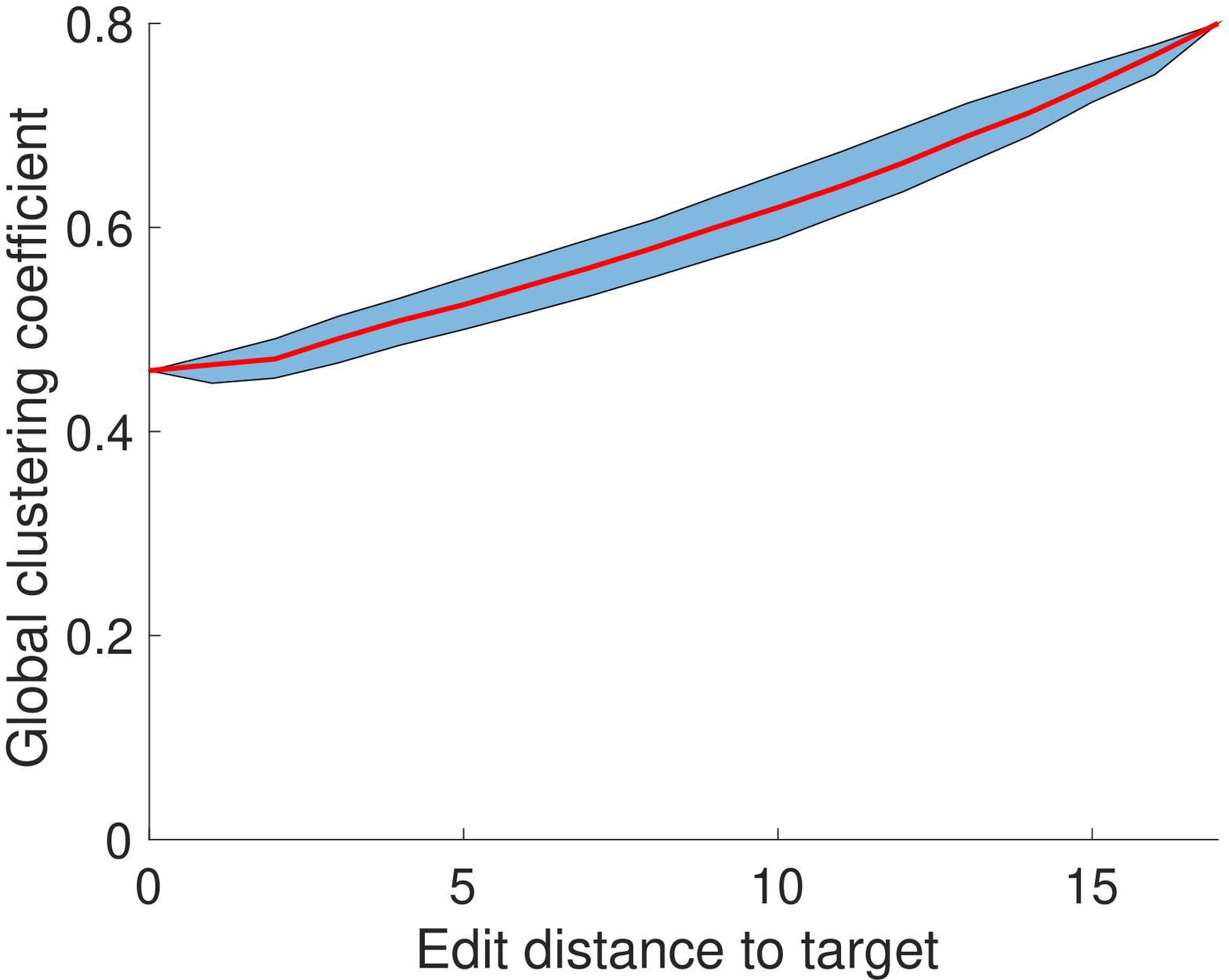}
		\caption{Snapshot 4 to 5.}
	\end{subfigure}
	\caption{Quartile box plots of global clustering coefficient vs.\ edit distance of the interpolations between snapshots of university freshmen friendships with rate parameter $s=1$. The quartiles are based on 10000 trials. The mean is in red and the lower and upper limits of the shaded region are the first and third quartiles, respectively.}
	\label{fig:vdbuntenvelopegc}
\end{figure}

\begin{figure}
	\begin{subfigure}{0.495\columnwidth}
		\includegraphics[width=0.95\linewidth]{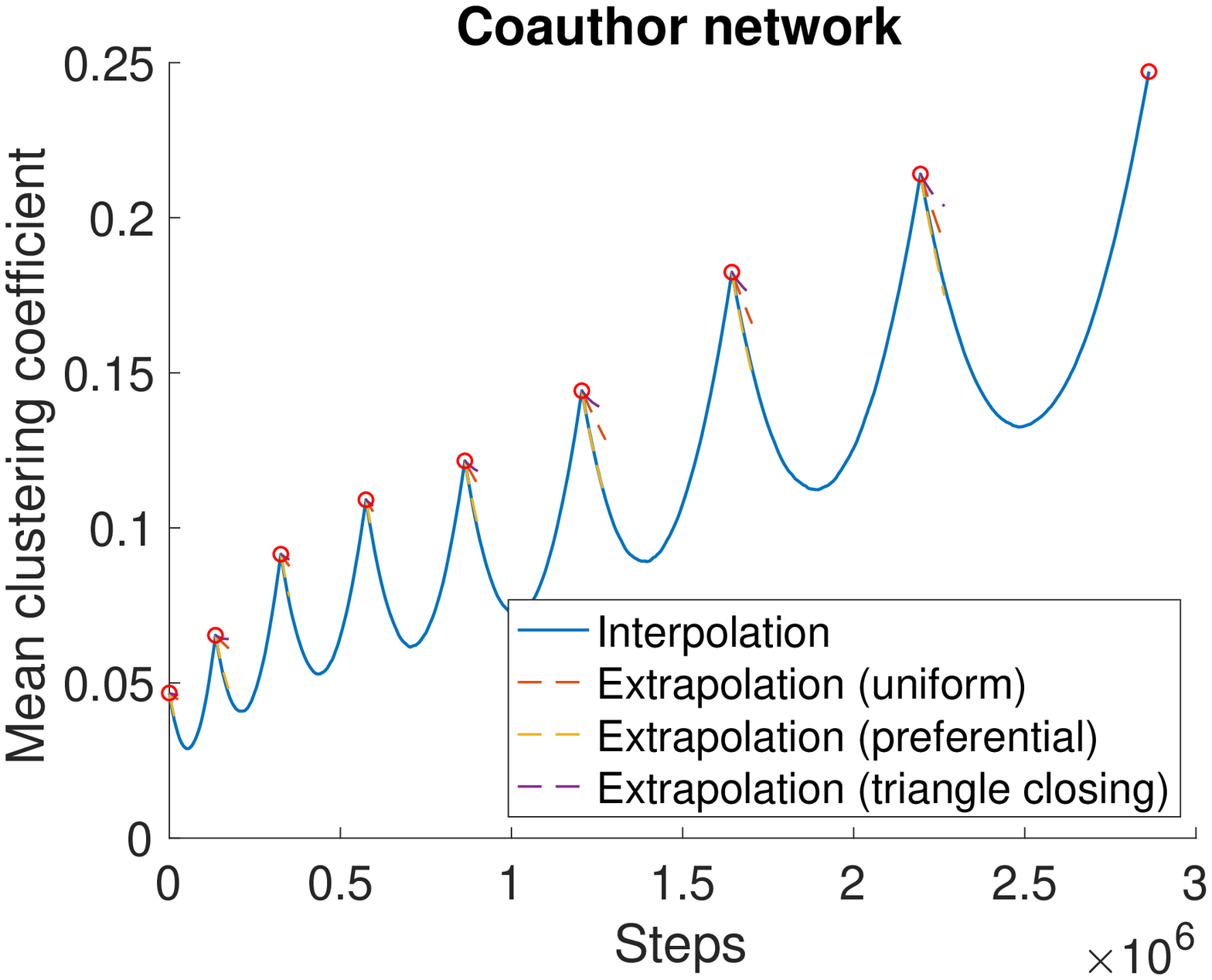}
	\end{subfigure}%
	\begin{subfigure}{0.495\columnwidth}
		\includegraphics[width=0.95\linewidth]{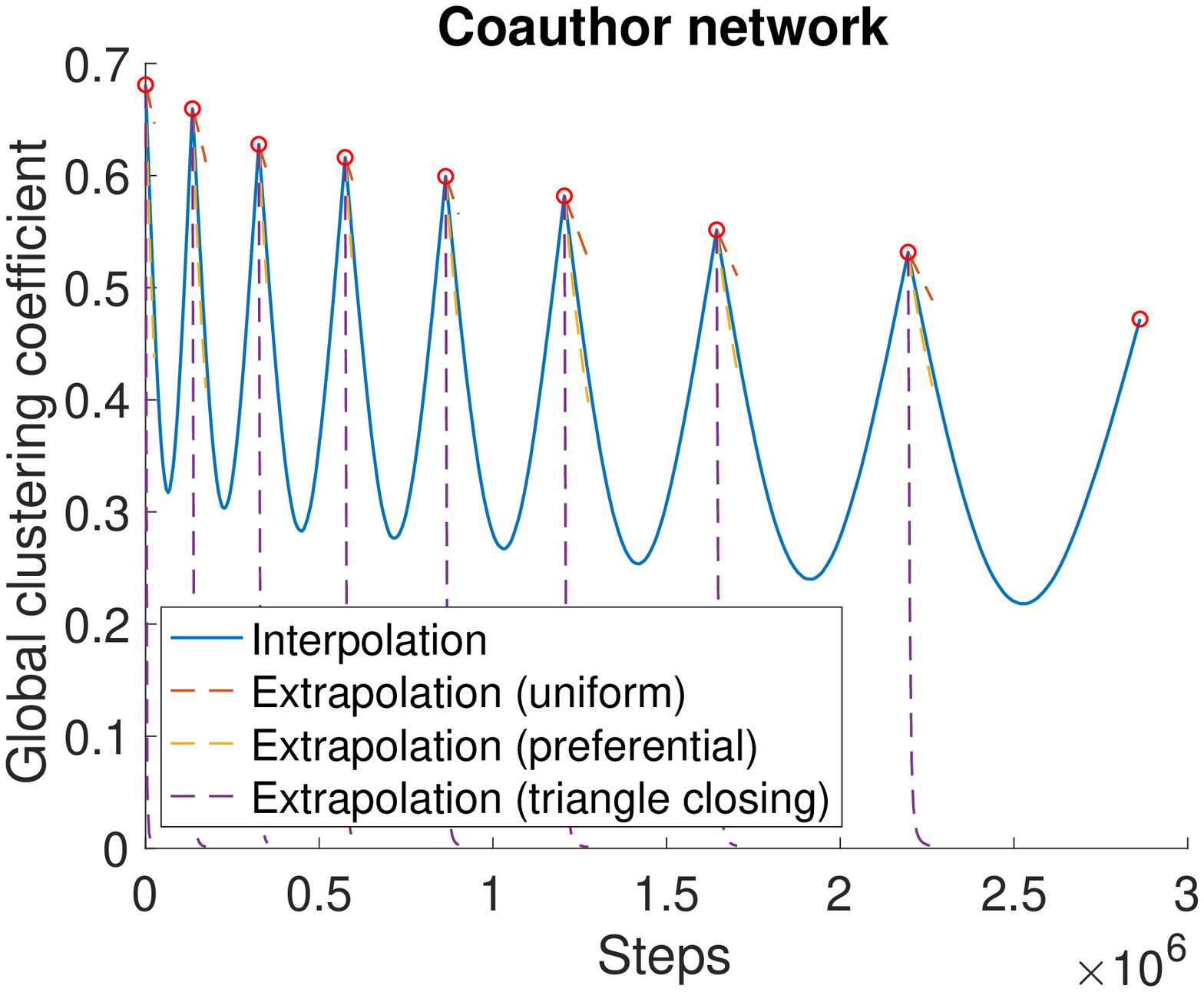}
	\end{subfigure}
	\caption{Evolution of mean and global clustering coefficient in interpolation between 9 snapshots of coauthor relationships. The rate parameter $s$ is $1$.}
	\label{fig:coauth}
\end{figure}

The interpolation and extrapolations in \Cref{fig:vdbunt} are shown for
only one run, but the qualitative behavior across runs is similar. Notably, we observe that our interpolation scheme viably interpolates the snapshots while the growth models fail to do so. For the
purposes of verifying consistency across different instances of our interpolation scheme, \Cref{fig:vdbuntenvelopegc}
provides a quartile box plot of the global clustering coefficient
as a function of edit distance for two pairs of sequential snapshots: one where the
beginning and ending clustering coefficients are similar, and one where they are not. The mean clustering coefficient has a comparable level of variability.

We repeated the comparison of interpolation and extrapolation using the DBLP
coauthorship dataset \cite{Ley-2009-DBLP}, 
which is a series of 18 graphs detailing the coauthor relationships among the conference proceedings papers on DBLP each year from 2000--2017. To prevent the clustering coefficients from being unduly influenced by outliers, we ignored papers from the dataset with more than 10 authors, and we aggregated the coauthorship data every 2 years (thus producing 9 snapshots). The comparison is shown in \Cref{fig:coauth}.

In this case, the clustering coefficients for the interpolation dip
substantially instead of staying roughly linear from one snapshot to
another. This is because the edits in the interpolation are done uniformly at
random. This presumably does not reflect the dynamics of the actual coauthorship
dynamic network, which preserves clustering structure where coauthorship forms
cliques in the network. This coauthorship example
shows that uniform edits are not a perfect model for describing changes in
a network, and more correlated edit rules may be necessary to preserve such a
high clustering coefficient. We leave this for future work.



\section{Conclusions}
We developed a model for network interpolation, analyzed quantities of interest,
and motivated it using snapshots from experiments. In doing so, we demonstrated
substantial improvements over extrapolation methods in reconstructing important
network statistics of dynamic graphs. We also provided a conceptual understanding of
trajectories between graphs in terms of edits and edit distances and opened an
area of investigation into the rules that should govern such trajectories. \rev{Although we analyzed the effect of the rate parameter in our model, further work remains to be done on the effect of sigmoid functions other than the standard logistic function and more complex edge edit rules beyond uniformly random edits. For example, edge edit rules that favor well-connectedness may help better preserve certain graph quantities of interest. Such insights would enhance the adaptability of our model to real-world situations.}

We emphasize that our model has broad scope: it can be applied to any set of
snapshots---real or synthetic---to generate random, plausible sequences of graphs
to move from a starting graph to a target graph in a wide variety of domains.
This makes our model applicable to
generating synthetic streaming datasets with planted structure (including
planted partition, planted clique, and planted coloring) as well as other types
of emerging structure in real-world networks. It also has the potential
to be used for on-demand or ``live'' data mining platforms.
Furthermore, our model is amenable to theoretical analysis and opens up
new theoretical directions in temporal network analysis.



\bibliographystyle{siamplain}
\bibliography{references}

\end{document}